\documentclass[10pt]{article}

\usepackage{hyperref}
\usepackage{times}
\usepackage{graphicx}
\usepackage{wrapfig}
\usepackage{color}
\usepackage{url}
\usepackage{cite}
\usepackage{amsmath}
\usepackage{amsthm,amssymb}
\usepackage{proof}
\usepackage[matrix,arrow,curve]{xy}

\allowdisplaybreaks

\theoremstyle{plain}
\newtheorem{theorem}{Theorem}
\newtheorem{lemma}[theorem]{Lemma}
\newtheorem{corollary}[theorem]{Corollary}

\theoremstyle{definition}
\newtheorem{definition}[theorem]{Definition}
\newtheorem{remark}[theorem]{Remark}
\newtheorem{notation}[theorem]{Notation}
\newtheorem{example}[theorem]{Example}

\newcommand{\entail}{\vdash}
\newcommand{\define}[1]{{\em #1}}

\newcommand{\vectoset}{{\rm VtoS}}

 \newcommand{\bor}{\ |\ }
\newcommand{\borsmall}{\,|\,}
\newcommand{\tensor}{\otimes}
\newcommand{\loli}{\mathbin{\multimap}}

\newcommand{\FinSet}{\mbox{\bf FinSet}}
\newcommand{\FinVec}{\mbox{\bf FinVec}}

\newcommand{\pair}[1]{{\langle\,{#1}\,\rangle}}
\newcommand{\punit}{{\star}}
\newcommand{\tunit}{{\bf 1}}
\newcommand{\bottype}{{\bf 0}}
\newcommand{\ttrue}{{\tt t}\!{\tt t}}
\newcommand{\ffalse}{{\tt f}\!{\tt f}}
\newcommand{\ifterm}[3]{{{\tt if}\,{#1}\,{\tt then}\,{#2}\,{\tt else}\,{#3}}}
\newcommand{\letunit}[2]{{\tt let}\,\star={#1}\,{\tt in}\,{#2}}
\newcommand{\bit}{{\tt Bool}}

\newcommand{\opeq}{\mathrel{\simeq_{\rm op}}}
\newcommand{\axeq}{\mathrel{\simeq_{\rm ax}}}

\newcommand{\defas}{\mathrel{{:}{=}}}

\newcommand{\fsdenot}[1]{{[\![{#1}]\!]^{\rm set}}}
\newcommand{\fvdenot}[1]{{[\![{#1}]\!]^{\rm vec}}}

\begin{document}

\title{Finite Vector Spaces as Model of Simply-Typed
  Lambda-Calculi\footnote{Accepted at ICTAC 2014. The final
    publication will be available at \url{http://link.springer.com}}}

\author{
  \scriptsize
  \begin{tabular}{c}
    {\large Beno\^it Valiron\footnote{Partially supported by the ANR project ANR-2010-BLAN-021301 LOGOI.}}\\[1.5ex]
    PPS, UMR 7126, Universit\'e Paris Diderot,\\
    Sorbonne Paris Cit\'{e}, F-75205 Paris, France
  \end{tabular}
  \quad
  \begin{tabular}{c}
    {\large Steve Zdancewic\footnote{This work
      has been funded in part by NFS grant CCF-1421193.}}\\[1.5ex]
    University of Pennsylvania,\\
    Philadelphia, US
  \end{tabular}
  \\[-1ex]
  ~}

\maketitle

\begin{abstract}
In this paper we use finite vector spaces (finite dimension, over
finite fields) as a non-standard computational model of linear
logic. We first define a simple, finite PCF-like lambda-calculus with
booleans, and then we discuss two finite models, one based on finite
sets and the other on finite vector spaces. The first model is shown
to be fully complete with respect to the operational semantics of the
language. The second model is not complete, but we develop an
algebraic extension of the finite lambda calculus that recovers
completeness. The relationship between the two semantics is described,
and several examples based on Church numerals are presented.
\end{abstract}

\section{Introduction}

A standard way to study properties of functional programming languages
is via denotational semantics. 
A denotational semantics (or model) for a language is a
mathematical representation of its programs~\cite{winskel-book}, and
the typical representation of a term is a function whose domain and
codomain are the
data-types of input and output.
This paper is concerned with a non-standard class of models
based on finite vector spaces.

The two languages we will consider are based on PCF~\cite{dana93pcf} -- the
laboratory mouse of functional programming languages. PCF comes as an
extension of simply-typed lambda-calculus with a call-by-name reduction
strategy, basic types and term constructs, and can be easily
extended to handle specific effects. Here, we define ${\bf PCF}_{\it f}$
as a simple lambda-calculus with pairs and booleans, and ${\bf PCF}_{\it f}^{\it alg}$, its
extension to linear combinations~of~terms.

There has been much work and progress on various denotational models
of PCF,
often with the emphasis on trying to achieve full abstraction. The
seminal works are using term models~\cite{milner77},
cpos~\cite{plotkin77} or game
semantics~\cite{abramski00}, while more recent works use
quantitative semantics of linear logic~\cite{ll}
and discuss probabilistic extensions~\cite{pcs}
or non-determinism~\cite{bucciarelli12}.

As a category, a model for a PCF language is at least required to be
cartesian closed to model internal morphisms and pairing. An
expressive class of cartesian closed categories can be made of models
of linear logic, by considering the (co)Kleisli category stemming from
the modality ``$!$''.
Although the models that are usually
considered are rich and expressive~\cite{finiteness,pcs,bucciarelli12},
``degenerate'' 
models nevertheless exist~\cite{pratt94chu,hyland03glueing}.
The consequences of the existence of such models of PCF
have not been explored thoroughly.

In this paper, we consider
two related finitary categories: the category of finite sets and
functions ${\FinSet}$ and the category of finite vector
spaces and linear functions ${\FinVec}$, i.e.
 finite-dimensional vector spaces over a finite
field. The adjunction between these two categories
 is known in the folklore to give a model of
linear logic~\cite{pratt92mail}, 
but the computational behavior of the corresponding coKleisli
category $\FinVec_!$ as a
model of PCF has not been studied until now.

The primary motivation for this work is simple curiosity: What do the
vectors interpreting lambda calculus terms look like?  Though not the
focus of this paper, one could imagine that the ability to encode
programming language constructs in the category of vector spaces might
yield interesting applications.  For instance, a Matlab-like
programming language that natively supports rich datatypes and
first-class functions, all with the same semantic status as
``vectors'' and ``matrices.''  A benefit of this design would be the
possibility of ``typed'' matrix programming, or perhaps sparse matrix
representations based on lambda terms and their semantics.  The
algebraic lambda calculus sketched in this paper is a
(rudimentary) first step in this direction.  Conversely, one could
imagine applying techniques from linear algebra to lambda calculus
terms.  For instance, finite fields play a crucial role in
cryptography, which, when combined with programming language semantics,
might lead to new algorithms for homomorphic encryption.

The goal here is more modest, however. The objective of the paper is to study how the two models
${\FinSet}$ and  $\FinVec_!$ fit
with respect to the language ${\bf PCF}_{\it f}$ and its algebraic extension
${\bf PCF}_{\it f}^{\it alg}$. In particular, we consider the usual three gradually 
more constraining properties: \textit{adequacy}, \textit{full abstraction} and
\textit{full completeness}.
A semantics is \textit{adequate} if whenever terms of some observable type
($\bit$ for example) are operationally equivalent then their
denotations match. An adequate semantics is ``reasonable'' in the
sense that programs and their representations match at ground type.
The semantics is \textit{fully abstract} if operational equivalence and
equality of denotation are the same thing for all types. In this
situation, programs and their denotations are in correspondence at all
types, but the model can contain non-representable
elements.
Finally, the semantics is \textit{fully complete} if moreover, every element in
the image of a type $A$ is representable by a term in the language.
With such a semantics, the set of terms and its mathematical
representation are fully correlated. If a semantics is fully complete,
then it is fully abstract and if it is fully abstract, then it is adequate.

\noindent
{\em Results.}~~
This paper presents the first account of the interpretation of two
PCF-like languages in finite vector spaces.
More specifically, we show that the category of finite sets
${\FinSet}$ forms a fully complete model for the language ${\bf PCF}_{\it f}$,
and that the coKleisli category ${\FinVec}_!$ is adequate but not
fully-abstract: this model has too many points compared to what one
can express in the language. We present several examples of the
encoding of Church numerals to illustrate the model.
We then present an algebraic extension ${\bf PCF}_{\it f}^{\it alg}$
of ${\bf PCF}_{\it f}$ and show that
${\FinVec}_!$ forms a fully complete model for this extension.
We discuss the relationship between the two languages and show how to
encode the extension within ${\bf PCF}_{\it f}$.

\noindent
{\em Related works.}~~
In the literature, finite models for lambda-calculi are commonly
used. For example, Hillebrand analyzes databases as finite
models of the simply-typed
lambda calculus~\cite{hillebrand-phd}. Salvati presents
a model based on finite sets~\cite{fin-model}, while
Selinger presents models based on finite
posets~\cite{fin-model-fin-posets}. Finally, Solovev~\cite{soloviev83}
relate the equational theory of cartesian closed categories
with the category of finite sets.

More general than vector spaces, various categories of modules over
semirings, as standard models of linear logic have been studied as
computational models: sets and relations~\cite{bucciarelli12}, finiteness
spaces~\cite{finiteness}, probabilistic coherent
spaces~\cite{pcs}, \textit{etc.}

As models of linear logic, 
finite vector spaces are folklore~\cite{pratt92mail} and
appear as side examples of more
general constructions such as Chu spaces~\cite{pratt94chu} or
glueing~\cite{hyland03glueing}.
Computationally, Chu spaces (and then to some extent finite vector
spaces) have been used in connection with automata~\cite{pratt94chu}.
Finally, recently finite vector spaces have also been used as a toy
model for quantum computation (see
e.g.~\cite{schumacher10modal,james11quantum}).

Algebraic lambda-calculi, that is, lambda-calculi with a vectorial
structure have been first defined in connection with finiteness
spaces~\cite{ehrhard03differential,vaux09}. Another
approach~\cite{ad08,adv11} comes to a similar type of language
from quantum computation. The former approach is call-by-name while
the latter is call-by-value. A general categorical
semantics has been developed~\cite{valiron13typed} but no other
concrete models have been considered.

\smallskip
\noindent
{\em Plan of the paper.}~~
The paper is shaped as follows. Section~\ref{sec:pcf} presents a
finite PCF-style language ${\bf PCF}_{\it f}$ with pairs and booleans, together
with its operational semantics. Section~\ref{sec:finset} presents the
category ${\FinSet}$ of finite sets and functions, and discusses its
properties as a model of the language
${\bf PCF}_{\it f}$. Section~\ref{sec:finvec} describes finite vector spaces and
shows how to build a model of linear logic from the adjunction with
finite sets. Section~\ref{sec:lc-finvec-model} discusses the
corresponding coKleisli category as a model of ${\bf PCF}_{\it f}$ and presents
some examples based on Church numerals. As ${\bf PCF}_{\it f}$ is not fully-abstract,
Section~\ref{sec:alglc} explains how to extend the language to better
match the model. Finally, Section~\ref{sec:discussion} discusses various 
related aspects: the
relationship between ${\bf PCF}_{\it f}$ and its extension, other categories in play, 
and potential generalization of fields.

\section{A finite PCF-style lambda-calculus}
\label{sec:pcf}

We pick a minimal finite PCF-style language with pairs and
booleans. We call it ${\bf PCF}_{\it f}$: it is intrinsically typed
(i.e. Church-style: all subterms are defined with their type) and
defined as follows.
\[
\begin{array}{@{}l@{~~~~}l@{~~~~}l@{}}
  M,N,P & {:}{:}{=} & 
  x \bor \lambda x.M \bor MN \bor
  \pi_l(M) \bor \pi_r(M) \bor \pair{M,N}\bor\punit\bor
  \\
  &&
  \ttrue \bor \ffalse \bor \ifterm{M}{N}{P}\bor\letunit{M}{N}
  \\
  A,B & {:}{:}{=} &
  \bit \bor A \to B \bor A \times B \bor \tunit.
\end{array}
\]
Values, including ``lazy'' pairs (that is, pairs of arbitrary terms,
as opposed to pairs of values), are inductively defined by 
$U,V$ ${:}{:}{=}$
$x \bor$ $\lambda x.M \bor \pair{M,N}\bor\punit\bor
  \ttrue \bor \ffalse$.
The terms consist of the regular lambda-terms, plus specific term
constructs. The terms $\ttrue$ and $\ffalse$ respectively stand for
the booleans True and False, while $\ifterm{-}{-}{-}$ is the boolean
test operator. The type $\bit$ is the type of the booleans. The term
$\star$ is the unique value of type $\tunit$, and $\letunit{-}{-}$
is the evaluation of a ``command'', that is, of a term evaluating to
$\star$. The term $\pair{-,-}$ is the pairing operation,
and $\pi_l$ and $\pi_r$ stand for the left and right projections. The
type operator $(\times)$ is used to type pairs, while $(\to)$ is used
to type lambda-abstractions and functions.

\begin{table}[t]
  \caption{Typing rules for the language ${\bf PCF}_{\it f}$.}
  \label{tab:typrule}
  \scalebox{0.79}{\begin{minipage}{6in}
  \[
  \infer{\Delta,x:A\entail x:A}{}
  \quad
  \infer{\Delta\entail\punit:\tunit}{}
  \quad
  \infer{\Delta\entail\ttrue,\ffalse:\bit}{}
  \quad
  \infer{\Delta\entail\lambda x.M:A\to B}{\Delta,x:A\entail M:B}
  \quad
  \infer{\Delta\entail\pi_i(M):A_i}{\Delta\entail M:A_l\times A_r}
  \]
  \[
  \infer{\Delta\entail MN:B}{\begin{array}{l}\Delta\entail M:A\to B \\
      \Delta\entail
    N:A\end{array}}
  ~~~
  \infer{\Delta\entail \pair{M,N}:A{\times}B}{\begin{array}{l}\Delta\entail M:A\\
    \Delta\entail N : B\end{array}}
  ~~~
  \infer{\Delta\entail \ifterm{M}{N_1}{N_2}:A}{\begin{array}{l}\Delta\entail M:\bit\\
    \Delta\entail N_1,N_2 : A\end{array}}
  ~~~
  \infer{\Delta\entail \letunit{M}{N}:A}{\begin{array}{l}\Delta\entail M:\tunit\\
    \Delta\entail N : A\end{array}}
  \]
  \end{minipage}}
\end{table}

A typing judgment is a sequent of the form $\Delta\entail M:A$, where
$\Delta$ is a typing context: a collection of typed variables $x:A$. A
typing judgment is said to be {\em valid} when there exists a valid typing
derivation built out of the rules in Table~\ref{tab:typrule}.

Note that since terms are intrinsically typed, for any valid typing
judgment there is only one typing derivation. Again because the terms
are intrinsically typed, by abuse of notation when the context is clear
we use $M:A$ instead of $\Delta\entail M:A$.

\begin{notation}
  When considering typing judgments such as $x:A\entail M:B$ and
  $y:B\entail N:C$, we use categorical notation to denote the
  composition: $M;N$ stands for the (typed) term $x:A\entail(\lambda
  y.N)M:C$, also
  written as $A\xrightarrow{M}B\xrightarrow{N}C$.
  We also extend pairs to finite
  products as follows: $\pair{M_1,M_2,\ldots}$ is the term
  $\pair{M_1,\pair{M_2,\pair{\ldots}}}$. Projections are generalized
  to finite products with the notation $\pi_i$ projecting the $i$-th
  coordinate of the product. Types are extended similarly:
  $A\times\cdots\times A$, also written as $A^{\times n}$, is defined
  as $A\times (A\times (\cdots))$. 
\end{notation}

\subsection{Small-step semantics}

\begin{table*}[b]
  \caption{Small-step semantics for the language ${\bf PCF}_{\it f}$.}
  \label{tab:cbn}
  \scalebox{.865}{\begin{minipage}{5.5in}
  \[
  \begin{array}{rcl}
    (\lambda x.M)N &\to& M[N/x]
    \\
    \letunit{\star}{M}&\to& M
  \end{array}
  \quad
  \begin{array}{rcl}
    \pi_l\pair{M,N}  &\to& M
    \\
    \pi_r\pair{M,N}  &\to& N
  \end{array}
  \quad
  \begin{array}{rcl}
    \ifterm{\ttrue}{M}{N} &\to& M
    \\
    \ifterm{\ffalse}{M}{N} &\to& N
  \end{array}
  \]
  \[
  \infer{MN\to M'N}{M\to M'}
  \qquad
  \infer{\pi_l(M)\to \pi_l(M')}{M\to M'}
  \qquad
  \infer{\pi_r(M)\to \pi_r(M')}{M\to M'}
  \]
  \[
  \infer{\ifterm{M}{N_1}{N_2}\to \ifterm{M'}{N_1}{N_2}}{M\to M'}
  ~~
  \infer{\letunit{M}{N}\to\letunit{M'}{N}}{M\to M'}
  \]
  \end{minipage}}
\end{table*}

The language is equipped with a call-by-name reduction strategy: a
term $M$ reduces to a term $M'$, denoted with $M\to M'$, when the
reduction can be derived from the rules of Table~\ref{tab:cbn}.
We use the notation $\to^*$ to refer to the reflexive transitive
closure of $\to$.

\begin{lemma}
  \label{lem:safety-prop}
  \label{lem:uniq-value}
  \label{lem:sn}
  (1) For any well-typed term $M:A$, either $M$ is a value or $M$ reduces
  to some term $N:A$. 
  ~~
  (2) The only closed value of type $\tunit$ is $\punit$ and the only
  closed values of type $\bit$ are $\ttrue$ and $\ffalse$.
  ~~
  (3) The language ${\bf PCF}_{\it f}$ is strongly normalizing.
\end{lemma}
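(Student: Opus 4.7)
Part~(2) is immediate from the value grammar and the typing rules. A closed value of type $\tunit$ cannot be a variable (the context is empty) and has neither an arrow nor a product shape, so the only possibility is $\punit$; likewise the closed values of type $\bit$ are exactly $\ttrue$ and $\ffalse$. While setting this up I would also record the analogous canonicity facts at arrow and product types (a closed value of type $A \to B$ is a lambda abstraction, and a closed value of type $A \times B$ is a pair), since Part~(1) needs them.

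Part~(1) is the standard progress lemma for closed well-typed terms, proved by induction on the derivation of $\emptyset \entail M:A$. The value-forming rules are immediate. For each elimination form among $MN$, $\pi_i(M)$, $\ifterm{M}{N_1}{N_2}$, $\letunit{M}{N}$, the induction hypothesis says that the principal subterm is either a value or reduces; in the latter case the matching context rule of Table~\ref{tab:cbn} fires, and in the former the canonicity observation forces the principal subterm to have exactly the shape required by the corresponding base reduction.

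For Part~(3), I would use a Tait--Girard reducibility argument. Define, by induction on types, a family of sets $R_A$ of strongly normalizing terms of type $A$: $R_\tunit$ and $R_\bit$ consist of all SN terms at their type; $R_{A\to B} = \{M \mid MN \in R_B \text{ for every } N \in R_A\}$; and $R_{A \times B} = \{M \mid \pi_l M \in R_A \text{ and } \pi_r M \in R_B\}$. The standard three properties (each $R_A$ is contained in the SN terms, contains every variable of that type, and is closed under expansion along a base reduction whose discarded subterms are SN) are proved simultaneously by induction on $A$ using the neutral-term technique. The main lemma is then: if $x_1{:}A_1, \ldots, x_n{:}A_n \entail M:B$ and $V_i \in R_{A_i}$, then $M[V_i/x_i] \in R_B$, by induction on $M$; instantiating $V_i = x_i$ yields SN for every well-typed term.

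The main obstacle is the if-then-else case of the reducibility lemma: unlike $\beta$ and the projection rules, the if-reduction fires only after the boolean guard has itself reduced to $\ttrue$ or $\ffalse$, so I have to show that $\ifterm{M}{N_1}{N_2} \in R_A$ whenever $M \in R_\bit$ and $N_1, N_2 \in R_A$ by a secondary induction on the sum of the maximal reduction lengths of $M$, $N_1$, $N_2$, using the CR3-style expansion property. This is routine but is the one spot requiring care; the analogous argument for $\letunit{M}{N}$ goes through in the same way.
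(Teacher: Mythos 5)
Your plan is correct, but for the only part the paper actually argues --- strong normalization --- you take a genuinely different route. The paper dispatches part~(3) in one line by observing that ${\bf PCF}_{\it f}$ embeds into System~F (booleans, unit and products are all System-F-definable, and the reduction rules are simulated), so an infinite reduction sequence in ${\bf PCF}_{\it f}$ would yield one in System~F, contradicting its known strong normalization; parts~(1) and~(2) are left implicit as routine. You instead give a self-contained Tait--Girard reducibility proof, with the usual care at $\ifterm{-}{-}{-}$ and $\letunit{-}{-}$, where the redex only fires after the guard normalizes, handled by a secondary induction on maximal reduction lengths plus the CR3 expansion property for neutral terms. Both work: the paper's argument is far shorter but outsources the hard content to System~F and tacitly requires checking that the translation simulates every reduction step (including the commuting/context rules); yours is longer but elementary and does not depend on an external normalization theorem. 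Your treatment of parts~(1) and~(2) by canonical forms and progress is the standard one and matches what the paper takes for granted.
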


\begin{proof}
  The fact that the language ${\bf PCF}_{\it f}$ is strongly
  normalizing comes from the fact that it can be easily encoded in the
  strongly normalizing language system F~\cite{sysF}.
\end{proof}

\subsection{Operational equivalence}
\label{sec:lc-cat}

We define the operational equivalence on terms in a standard way. A
{\em context $C[-]$} is a ``term with a hole'', that is, a
term consisting of the following grammar:
\[
\begin{array}{@{}l@{~}l@{~}l@{}}
  C[-] & {:}{:}{=} & 
  x \bor [-]\bor \lambda x.C[-] \bor C[-]N \bor MC[-] \bor\pi_l(C[-]) \bor
  \pi_r(C[-]) \bor
  \pair{C[-],N}\bor
  \\
  &&\pair{M,C[-]}\borsmall
  \punit\borsmall
  \ttrue \borsmall \ffalse \bor \ifterm{C[-]}{N}{P}\borsmall
  \ifterm{\!M\!}{C[-]}{P}\borsmall\\
  &&
  \ifterm{M}{N\!}{C[-]}\bor
  \letunit{C[-]}{M}\bor\letunit{M}{C[-]}.
\end{array}
\]
The hole can bind term variables, and a well-typed context is defined
as for terms. A closed context is a context with no free variables.

We say that $\Delta\entail M:A$ and $\Delta\entail N:A$ are
operationally equivalent, written $M\opeq N$, if for all closed
contexts $C[-]$ of type $\bit$ where the hole binds $\Delta$, for all
$b$ ranging over $\ttrue$ and $\ffalse$, $C[M]\to^* b$ if and only if
$ C[N]\to^* b $.

\subsection{Axiomatic equivalence}

We also define an equational theory for the language, called {\em
  axiomatic equivalence} and denoted with $\axeq$, and mainly
used as a technical apparatus.
The relation $\axeq$ is defined as the smallest reflexive, symmetric,
transitive and fully-congruent relation verifying the rules of
Table~\ref{tab:cbn}, together with the rule $\lambda
x.Mx \axeq M$ and the rule $\pair{\pi_l(M),\pi_r(M)} \axeq M$. 
A relation $\sim$ is said to be {\em fully-congruent}
on ${\bf PCF}_{\it f}$ if whenever $M\sim M'$, for all contexts $C[-]$
we also have $C[M]\sim C[N]$. The two additional rules are standard
equational rules for a lambda-calculus~\cite{lambek-scott}.

\begin{lemma}
  \label{lem:opax}
  If $M:A$ and $M\to N$ then $M\axeq N$.\qed
\end{lemma}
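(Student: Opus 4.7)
The plan is a straightforward induction on the derivation of $M \to N$ using the rules in Table~\ref{tab:cbn}. By definition, $\axeq$ is the smallest reflexive, symmetric, transitive, fully-congruent relation containing the equations displayed in Table~\ref{tab:cbn} (plus the two eta rules), so the work amounts to unfolding these closure conditions.

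For the base cases, the reduction step is one of the six axioms: $(\lambda x.M)N \to M[N/x]$, $\letunit{\star}{M} \to M$, $\pi_l\pair{M,N} \to M$, $\pi_r\pair{M,N} \to N$, $\ifterm{\ttrue}{M}{N} \to M$, or $\ifterm{\ffalse}{M}{N} \to N$. Each of these is explicitly one of the equations that $\axeq$ is declared to contain, so $M \axeq N$ holds immediately.

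For the inductive cases, the reduction is derived by one of the congruence-style rules in the lower part of Table~\ref{tab:cbn}, e.g.\ $MN \to M'N$ from $M \to M'$, or similarly under $\pi_l(-)$, $\pi_r(-)$, $\ifterm{-}{N_1}{N_2}$, and $\letunit{-}{N}$. In each such case, the induction hypothesis gives $M \axeq M'$; since these single-hole positions are contexts $C[-]$ in the sense defined in Section~\ref{sec:lc-cat}, full congruence of $\axeq$ yields $C[M] \axeq C[M']$, which is the desired conclusion.

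I do not expect any real obstacle here: the statement is essentially a sanity check that the axiomatic theory subsumes one-step reduction, and it falls out directly from how $\axeq$ was defined. The only mild care needed is to match each context rule of the small-step semantics with an actual one-hole context $C[-]$ from the grammar defining contexts, but this is immediate by inspection.
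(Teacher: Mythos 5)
Your proof is correct and matches the intended argument: the paper omits the proof entirely (the lemma is stated with an immediate \qed), precisely because $\axeq$ is defined as the smallest fully-congruent equivalence relation containing the rules of Table~\ref{tab:cbn}, so the induction on the derivation of $M\to N$ that you spell out is the standard and expected justification.
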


\section{Finite Sets as a concrete model}
\label{sec:lc-finset-model}
\label{sec:finset}

\begin{table*}[t]
  \caption{Denotational semantics for the language ${\bf PCF}_{\it f}$.}
  \label{tab:denot}
  \scalebox{.758}{\begin{minipage}{6.3in}
  \[
  \begin{array}{rcl}
    \fsdenot{\Delta,x:A\entail x:A} &:& (d,a) \longmapsto a
    \\[1.2ex]
    \fsdenot{\Delta\entail \ttrue:\bit} &:& d \longmapsto \ttrue
    \\[1.2ex]
    \fsdenot{\Delta\entail \ffalse:\bit} &:& d \longmapsto \ffalse
    \\[1.2ex]
    \fsdenot{\Delta\entail \punit:\tunit} &:& d \longmapsto \punit    
  \end{array}
  \quad
  \begin{array}{rcl}
    \fsdenot{\Delta\entail \pair{M,N}:A\times B} &:& d \longmapsto
    \pair{
      \fsdenot{M}(d),
      \fsdenot{N}(d)
    }
    \\[1.2ex]
    \fsdenot{\Delta\entail MN:B} &:& d \longmapsto
    \fsdenot{M}(d)(\fsdenot{N}(d))
    \\[1.2ex]
    \fsdenot{\Delta\entail\pi_l(M):A} &=&
    \fsdenot{M} ; \pi_l
    \\[1.2ex]
    \fsdenot{\Delta\entail\pi_r(M):B} &=&
    \fsdenot{M} ; \pi_r
  \end{array}
  \]
  \[
  \fsdenot{\Delta\entail\lambda x.M:A\to B} ~~=~~ d \longmapsto (a
  \mapsto \fsdenot{M}(d,a))
  \quad
  \fsdenot{\Delta\entail\letunit{M}{N}:A} ~~=~~ \fsdenot{N}
  \]
  \[
  \fsdenot{\Delta\entail\ifterm{M}{N}{P}:A} ~~=~~ d \longmapsto 
  \left\{
    \begin{array}{ll}
      \fsdenot{N}(d)
      &
      \textrm{if~~ $\fsdenot{M}(d) = \ttrue$,}
      \\
      \fsdenot{P}(d)
      &
      \textrm{if~~ $\fsdenot{M}(d) = \ffalse$.}
    \end{array}
  \right.
  \]
  \end{minipage}}
\end{table*}

Finite sets generate the full sub-category {\mbox{\FinSet}} of the category
{\bf Set}: objects are finite sets and morphisms are set-functions
between finite sets. The category is cartesian closed~\cite{soloviev83}: the product is
the set-product and the internal hom between two sets $X$ and $Y$ is
the set of all set-functions from $X$ to $Y$. Both sets are
finite: so is the hom-set.

We can use the category $\FinSet$ as a model for our
PCF language ${\bf PCF}_{\it f}$. The denotation of types corresponds to
the implicit meaning of the types: $\fsdenot{\tunit}\defas
\{\,\star\,\}$,
$\fsdenot{\bit} \defas \{\, \ttrue,\ffalse \,\}$, the product is the set-product
$\fsdenot{A\times B} \defas \fsdenot{A}\times\fsdenot{B}$, while the
arrow is the set of morphisms:
$\fsdenot{A\to B}\defas {\FinSet}(\fsdenot{A},\fsdenot{B})$.
The set $\{\ttrue,\ffalse\}$ is also written $\bit$. Similarly, the set
$\{\star\}$ is also written $\tunit$.
The denotation of a typing judgment $x_1:A_1,\ldots x_n:A_n\vdash
M:B$ is a morphism
$
\fsdenot{A_1}\times\cdots\times\fsdenot{A_n}\rightarrow
\fsdenot{B},
$
and is inductively defined as in Table~\ref{tab:denot}. 
The variable $d$ is assumed to
be an element of $\fsdenot{\Delta}$, while $a$ and $b$ are elements
of $\fsdenot{A}$ and $\fsdenot{B}$
respectively.

This denotation is sound with respect to the operational equivalence.

\begin{lemma}
  \label{lem:axdenot}
  If $M\axeq N:A$ then $\fsdenot{M} = \fsdenot{N}$.\qed
\end{lemma}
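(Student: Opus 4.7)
The plan is to prove this by induction on the derivation of $M \axeq N$. Since $\axeq$ is by definition the least reflexive, symmetric, transitive, and fully-congruent relation on well-typed ${\bf PCF}_{\it f}$ terms extending the reduction rules of Table~\ref{tab:cbn} together with the two extra axioms $\lambda x.Mx \axeq M$ and $\pair{\pi_l(M),\pi_r(M)} \axeq M$, and since the relation ``$\fsdenot{-} = \fsdenot{-}$'' is automatically reflexive, symmetric, and transitive, the task reduces to two independent checks: (a) each generating axiom is validated by the denotation of Table~\ref{tab:denot}, and (b) the induced equality of denotations is preserved under arbitrary well-typed contexts.

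For (a), the axioms $\letunit{\star}{M} \axeq M$, the two projection rules $\pi_i\pair{M_1,M_2} \axeq M_i$, and the two conditional rules for $\ifterm{b}{M}{N}$ are immediate unfoldings of the corresponding clauses in Table~\ref{tab:denot}. The only reduction axiom requiring real work is $\beta$, and its soundness rests on a standard \emph{substitution lemma}
\[
\fsdenot{\Delta \entail M[N/x] : B}(d) ~=~ \fsdenot{\Delta, x{:}A \entail M : B}\bigl(d,\, \fsdenot{\Delta \entail N : A}(d)\bigr),
\]
which I would prove separately by structural induction on $M$. Granted this, $\fsdenot{(\lambda x.M)N}(d) = \fsdenot{\lambda x.M}(d)(\fsdenot{N}(d)) = \fsdenot{M}(d, \fsdenot{N}(d)) = \fsdenot{M[N/x]}(d)$. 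For the $\eta$ axiom, unfolding yields $\fsdenot{\lambda x.Mx}(d) = (a \mapsto \fsdenot{M}(d)(a))$, which equals $\fsdenot{M}(d)$ by function extensionality in $\FinSet$. For surjective pairing, $\fsdenot{\pair{\pi_l(M),\pi_r(M)}}(d) = \pair{\pi_l(\fsdenot{M}(d)), \pi_r(\fsdenot{M}(d))} = \fsdenot{M}(d)$, since an element of a set-theoretic product is determined by its two projections.

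For (b), I would perform a second, outer induction on the structure of the context $C[-]$. Each term constructor in Table~\ref{tab:denot} defines its denotation as a fixed set-theoretic operation applied pointwise to the denotations of its immediate subterms, so replacing an immediate subterm by one of equal denotation leaves the overall denotation unchanged. The hole case is trivial, the variable case is independent of the replacement, and every remaining case --- abstraction, application, pairing, projection, conditional, and $\letunit{-}{-}$ --- follows directly from the induction hypothesis together with the functionality of the corresponding semantic operation.

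The only substantive obstacle in this outline is the substitution lemma used for $\beta$: although routine, it must be proved with some attention to the implicit weakening and variable-renaming conventions forced by the intrinsic Church-style typing discipline. Once that lemma is in hand, every remaining step is a direct unfolding of Table~\ref{tab:denot}.
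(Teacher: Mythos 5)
The paper states this lemma without proof (it is marked as routine), and your argument is exactly the standard one the authors evidently had in mind: validate each generating axiom against Table~\ref{tab:denot} --- with a substitution lemma for $\beta$, extensionality for $\eta$, and surjective pairing for products --- and then close under congruence by induction on contexts. Your proof is correct, including the caveat about the weakening conventions needed for the $\eta$ case (where $M$ is typed in the extended context $\Delta,x{:}A$ on the left but in $\Delta$ on the right), so there is nothing to compare against and nothing to fix.
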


\begin{theorem}
  \label{th:sound}
  The model is sound with respect to the operational equivalence:
  Suppose that $\Delta\vdash M,N : A$.  If $\fsdenot{M} = \fsdenot{N}$
  then $M\opeq N$.
\end{theorem}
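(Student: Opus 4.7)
The plan is to chain together three ingredients already established in the paper: compositionality of $\fsdenot{-}$, soundness with respect to the axiomatic equivalence (Lemma~\ref{lem:axdenot}), and the fact that the axiomatic equivalence contains one-step reduction (Lemma~\ref{lem:opax}), all bridged by strong normalization (Lemma~\ref{lem:sn}).

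First I would establish \emph{compositionality}: for any well-typed context $C[-]$ and terms $M,N$ of the appropriate type with $\fsdenot{M}=\fsdenot{N}$, one has $\fsdenot{C[M]}=\fsdenot{C[N]}$. This is a straightforward induction on the grammar of contexts, since every clause of Table~\ref{tab:denot} defines $\fsdenot{-}$ on a compound term as a set-theoretic expression built from the denotations of its immediate subterms; plugging in two subterms with equal denotations therefore yields equal denotations at the outer node. The base case $C[-]=[-]$ is the hypothesis itself.

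Next I would prove \emph{adequacy at the boolean type}: for any closed $P:\bit$ and $b\in\{\ttrue,\ffalse\}$, one has $P\to^* b$ if and only if $\fsdenot{P}=b$. The forward direction follows from Lemma~\ref{lem:opax} (reductions lie inside $\axeq$, hence in $\axeq^*$ by transitivity) together with Lemma~\ref{lem:axdenot}, which give $\fsdenot{P}=\fsdenot{b}=b$. For the converse, strong normalization (Lemma~\ref{lem:sn}(3)) and progress (Lemma~\ref{lem:sn}(1)) yield a closed value $V:\bit$ with $P\to^* V$, and Lemma~\ref{lem:sn}(2) forces $V\in\{\ttrue,\ffalse\}$; the forward direction applied to $V$ then gives $\fsdenot{P}=V$, so $\fsdenot{P}=b$ implies $V=b$, hence $P\to^* b$.

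Finally I would assemble the theorem. Assume $\Delta\vdash M,N:A$ with $\fsdenot{M}=\fsdenot{N}$, and let $C[-]$ be a closed context of type $\bit$ binding $\Delta$. By compositionality, $\fsdenot{C[M]}=\fsdenot{C[N]}$, and for any $b\in\{\ttrue,\ffalse\}$ the adequacy step gives
\[
C[M]\to^* b \iff \fsdenot{C[M]}=b \iff \fsdenot{C[N]}=b \iff C[N]\to^* b,
\]
which is precisely $M\opeq N$. There is no real obstacle here: the whole argument is powered by compositionality and adequacy at ground type, and the only subtlety worth stating carefully is the use of strong normalization plus uniqueness of closed boolean values to invert $\fsdenot{-}$ on $\bit$.
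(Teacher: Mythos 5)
Your proof is correct, and it reaches the conclusion by a genuinely different decomposition than the paper's. The paper argues by contrapositive and, rather than proving compositionality by induction on contexts, \emph{internalizes the context as a term}: it replaces $C[M]$ by the $\beta$-equivalent term $(\lambda z.C[z\,x_1\ldots x_n])(\lambda x_1\ldots x_n.M)$, so that equality of the denotations of $C[M]$ and $C[N]$ follows from the single application clause of Table~\ref{tab:denot} (the denotation of an application is a function of the denotations of its two parts) together with Lemmas~\ref{lem:opax} and~\ref{lem:axdenot}, with no induction over the grammar of contexts. Both arguments ultimately rest on the same three ingredients (Lemmas~\ref{lem:sn}, \ref{lem:opax} and~\ref{lem:axdenot}). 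What your route buys is that the two reusable facts --- compositionality of $\fsdenot{-}$ and adequacy at the ground type $\bit$ --- are stated and proved explicitly, which is cleaner and closer to the standard ``soundness $=$ congruence $+$ adequacy at observables'' template; what the paper's closure trick buys is avoiding the context induction altogether. The only point in your argument that deserves a little more care is that induction: since the hole of $C[-]$ may sit under binders (it binds $\Delta$), the inductive statement has to be phrased for open terms in a growing typing context, and one silently uses that the semantics is compatible with weakening (the denotation of $M$ in an enlarged context is obtained by precomposing with a projection). This is routine, but it is exactly the bookkeeping the paper's $\lambda$-closure sidesteps.
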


\begin{proof}
  Suppose that $M \not\opeq N$ and let $\Delta$ be $\{x_i:A_i\}_i$. 
  Then, because of Lemma~\ref{lem:sn},
  there exists a context $C[-]$
  such that $C[M]\to^*\ttrue$ and $C[N]\to^*\ffalse$. It follows that
  $(\lambda z.C[z\,x_1{\ldots} x_n])(\lambda x_1{\ldots} x_n.M)$
  $\axeq$
  $\ttrue$ and
  $(\lambda z.C[z\,x_1{\ldots} x_n])(\lambda x_1{\ldots} x_n.N)$
  $\axeq$
  $\ffalse$.
  If the denotations of $M$ and $N$ were equal, so would be the
  denotations of the terms $(\lambda x_1\ldots x_n.M)$ and $(\lambda x_1\ldots
  x_n.N)$. Lemmas~\ref{lem:opax} 
  and~\ref{lem:axdenot} yield a contradiction.
\end{proof}

${\FinSet}$ and the language ${\bf PCF}_{\it f}$ are somehow two
sides of the same coin. Theorems ~\ref{th:compl} and~\ref{th:eq} formalize
this correspondence.

\begin{theorem}[Full completeness]
  \label{th:compl}
  For every morphism $f:\fsdenot{A}\to\fsdenot{B}$ there exists a valid
  judgment $x:A\vdash M:B$ such that $f = \fsdenot{M}$.
\end{theorem}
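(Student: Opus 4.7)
The plan is to prove two auxiliary statements simultaneously by structural induction on the type $A$:
\textbf{(i)} every element $a \in \fsdenot{A}$ is the denotation of some closed value $V_a$ of type $A$; and
\textbf{(ii)} there exists a closed term $\text{eq}_A : A \to A \to \bit$ whose denotation sends $(a,a')$ to $\ttrue$ if $a=a'$ and to $\ffalse$ otherwise. Note that $\fsdenot{A}$ is a finite set for every type $A$ (another easy induction), so in (ii) there are only finitely many pairs to distinguish.

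Once (i) and (ii) are known at all types, the theorem follows by a single case-analysis construction. Given $f : \fsdenot{A} \to \fsdenot{B}$, enumerate $\fsdenot{A} = \{a_1,\dots,a_n\}$ and define
\[
 M \;:=\; \ifterm{\text{eq}_A\,x\,V_{a_1}}{V_{f(a_1)}}{\ifterm{\text{eq}_A\,x\,V_{a_2}}{V_{f(a_2)}}{\cdots\ifterm{\text{eq}_A\,x\,V_{a_n}}{V_{f(a_n)}}{V_{f(a_n)}}}}.
\]
Then $x:A \vdash M:B$ is derivable, and reading off Table~\ref{tab:denot} shows $\fsdenot{M}(a_i) = f(a_i)$ for every $i$, hence $\fsdenot{M} = f$.

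For the base cases: take $V_\star := \punit$ and $\text{eq}_\tunit := \lambda x.\lambda y.\ttrue$ for $\tunit$; for $\bit$, the closed values $\ttrue,\ffalse$ represent both elements, and $\text{eq}_\bit$ is a two-level if-then-else. For a product $A\times B$, set $V_{(a,b)} := \pair{V_a,V_b}$ and define $\text{eq}_{A\times B}$ by comparing $\pi_l$ and $\pi_r$ componentwise and conjoining the booleans via the macro $\wedge := \lambda u.\lambda v.\ifterm{u}{v}{\ffalse}$. For the function type $A\to B$, statement (i) is supplied exactly by the construction of $M$ above: IH(ii) for $A$ provides $\text{eq}_A$ and IH(i) for $B$ provides each $V_{f(a_i)}$, so $\lambda x.M$ is the desired closed value of type $A\to B$. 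Statement (ii) uses the fact that functions on a finite set are equal iff pointwise equal:
\[
 \text{eq}_{A\to B} \;:=\; \lambda g.\lambda h.\; \bigwedge_{i=1}^{n} \text{eq}_B\,(g\,V_{a_i})\,(h\,V_{a_i}),
\]
where the iterated conjunction is encoded by nested if-then-else, the $V_{a_i}$ come from IH(i) for $A$, and $\text{eq}_B$ comes from IH(ii) for $B$.

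The only subtle point is the well-foundedness of the mutual recursion: producing (i) at $A\to B$ invokes (ii) at $A$ and (i) at $B$, and producing (ii) at $A\to B$ invokes (i) at $A$ and (ii) at $B$; in both cases the appeals are to strict sub-expressions of $A\to B$, so ordinary induction on types suffices. Everything else is mechanical verification against Table~\ref{tab:denot}; the conceptual content is simply that $\FinSet$ admits enough closed values and definable equality tests to implement any finite function by brute-force case analysis.
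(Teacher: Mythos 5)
Your proof is correct and follows essentially the same route as the paper: a mutual induction on types producing, for each element of $\fsdenot{A}$, a representing closed term together with a definable discriminator, and then realizing an arbitrary $f$ by a brute-force chain of \texttt{if}-\texttt{then}-\texttt{else}. The only cosmetic difference is that you package the discriminators as a single binary equality test $\text{eq}_A : A\to A\to\bit$, whereas the paper uses the equivalent family of unary detectors $\delta_a : A\to\bit$ indexed by $a\in\fsdenot{A}$.
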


\begin{proof}
  We start by defining inductively on $A$ two families of terms
  $M_a:A$ and $\delta_a:A\to\bit$ indexed by $a\in\fsdenot{A}$, such
  that $\fsdenot{M_a}=a$ and $\fsdenot{\delta_a}$ sends $a$ to
  $\ttrue$ and all other elements to $\ffalse$. For the types $\tunit$
  and $\bit$, the terms $M_\star$, $M_{\ttrue}$ and $M_{\ffalse}$ are
  the corresponding constants. The term $\delta_\star$ is $\lambda
  x.\star$, $\delta_{\ttrue}$ is $\lambda x.x$ while $\delta_{\ffalse}$ is
  the negation. For the type $A\times B$, one trivially calls the
  induction step. The type $A\to B$ is handled by remembering that the
  set $\fsdenot{A}$ is finite: if $g\in\fsdenot{A\to B}$, the term
  $M_g$ is the lambda-term with argument $x$ containing a list of {\tt
    if}-{\tt then}-{\tt else} testing with $\delta_a$ whether $x$ is
  equal to $a$, and returning $M_{g(a)}$ if it is. The term $\delta_g$
  is built similarly.
  The judgement $x:A\vdash M:B$ asked for in the theorem is obtained
  by setting $M$ to $(M_f)x$.
\end{proof}

\begin{theorem}[Equivalence]
  \label{th:eq}
  Suppose that $\Delta\vdash M,N : A$.  Then $\fsdenot{M} =
  \fsdenot{N}$ if and only if $M\opeq N$.
\end{theorem}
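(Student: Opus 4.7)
The theorem is a biconditional, and one direction, namely $\fsdenot{M} = \fsdenot{N}$ implies $M \opeq N$, is exactly the statement of Theorem~\ref{th:sound}. So the plan is to establish the converse: $M \opeq N$ implies $\fsdenot{M} = \fsdenot{N}$. I will prove this by contrapositive, constructing a distinguishing context out of the representability tools already developed for Theorem~\ref{th:compl}.

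Assume $\fsdenot{M} \neq \fsdenot{N}$, where $\Delta \vdash M,N : A$ and $\Delta = x_1{:}A_1,\dots,x_n{:}A_n$. Then there exists an environment $d = (a_1,\dots,a_n) \in \fsdenot{A_1}\times\cdots\times\fsdenot{A_n}$ with $\fsdenot{M}(d) \neq \fsdenot{N}(d)$. Write $a \defas \fsdenot{M}(d)$ and $b \defas \fsdenot{N}(d)$, so $a \neq b$ in $\fsdenot{A}$. The proof of Theorem~\ref{th:compl} provides, for each type, closed terms $M_{a_i} : A_i$ with $\fsdenot{M_{a_i}} = a_i$, and a closed term $\delta_a : A \to \bit$ whose denotation sends $a$ to $\ttrue$ and every other element of $\fsdenot{A}$ to $\ffalse$.

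Now consider the closed context $C[-] \defas \delta_a\bigl((\lambda x_1\ldots x_n.[-])\,M_{a_1}\cdots M_{a_n}\bigr)$ of type $\bit$, whose hole binds $\Delta$. By the inductive definition of the denotation in Table~\ref{tab:denot}, $\fsdenot{C[M]} = \fsdenot{\delta_a}(\fsdenot{M}(d)) = \fsdenot{\delta_a}(a) = \ttrue$, and similarly $\fsdenot{C[N]} = \fsdenot{\delta_a}(b) = \ffalse$. By strong normalization (Lemma~\ref{lem:sn}), both $C[M]$ and $C[N]$ reduce to closed values of type $\bit$, which by Lemma~\ref{lem:uniq-value} must each be either $\ttrue$ or $\ffalse$. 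Combining Lemma~\ref{lem:opax} (reduction is included in $\axeq$) with Lemma~\ref{lem:axdenot} (axiomatic equivalence preserves denotation), the resulting values have the same denotations as $C[M]$ and $C[N]$ respectively, forcing $C[M] \to^* \ttrue$ and $C[N] \to^* \ffalse$. Hence $M \not\opeq N$, proving the contrapositive.

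There is no real obstacle here: the construction of $\delta_a$ and the closed representatives $M_{a_i}$ in Theorem~\ref{th:compl} is precisely what is needed to internalize point-distinguishing observations in $\FinSet$ as contexts in ${\bf PCF}_{\it f}$. The only mildly delicate step is ensuring that $C[M]$ and $C[N]$ reduce to the booleans predicted by their denotations, which is exactly where strong normalization and the soundness of $\axeq$ (Lemmas~\ref{lem:sn}, \ref{lem:opax}, \ref{lem:axdenot}) combine cleanly.
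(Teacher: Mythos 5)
Your proposal is correct and follows essentially the same route as the paper: the left-to-right direction is delegated to Theorem~\ref{th:sound}, and the converse is proved by contrapositive, using the representability machinery of Theorem~\ref{th:compl} (a point-selecting term for the environment and a discriminator $\delta_a$ into $\bit$) to assemble a distinguishing context. You merely spell out the details the paper leaves implicit — in particular the argument, via Lemmas~\ref{lem:sn}, \ref{lem:uniq-value}, \ref{lem:opax} and~\ref{lem:axdenot}, that $C[M]$ and $C[N]$ actually reduce to the booleans their denotations predict — which is a faithful elaboration rather than a different proof.
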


\begin{proof}
  The left-to-right implication is Theorem~\ref{th:sound}. We prove
  the right-to-left implication by contrapositive. Assume that
  $\fsdenot{M} \neq \fsdenot{N}$. Then there exists a function
  $f:\tunit\to\fsdenot{A}$ and a function
  $g:\fsdenot{B}\to\fsdenot{\bit}$ such that the boolean
  $f;\fsdenot{M};g$ is different from $f;\fsdenot{N};g$. By
  Theorem~\ref{th:compl}, the functions $f$ and $g$ are representable
  by two terms $N_f$ and $N_g$. They generate a context that
  distinguishes $M$ and $N$: this proves that $M \not\opeq N$.
\end{proof}

\begin{corollary}
  Since it is fully complete, the semantics $\FinSet$ is also adequate
  and fully abstract with respect to ${\bf PCF}_{\it f}$.\qed
\end{corollary}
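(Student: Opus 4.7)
The plan is to observe that this corollary is essentially a repackaging of the implication chain already spelled out in the introduction: full completeness implies full abstraction, which in turn implies adequacy. So the proof reduces to checking that the relevant theorems already established, namely Theorem~\ref{th:compl} and Theorem~\ref{th:eq}, line up with the definitions of these three properties, and no new mathematical content is required.

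First, I would note that full abstraction of the $\FinSet$ semantics is literally the statement of Theorem~\ref{th:eq}: for every pair $\Delta\entail M,N:A$ we have $\fsdenot{M}=\fsdenot{N}$ iff $M\opeq N$, which holds at every type $A$ and in particular for non-observable types. That is exactly what full abstraction requires, so no extra argument is needed for this half.

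Second, adequacy is the specialization of this equivalence to the observable type $\bit$ (and, trivially, to $\tunit$): if $M,N:\bit$ with $\fsdenot{M}=\fsdenot{N}$, then by Theorem~\ref{th:eq} they are operationally equivalent, which is the adequacy condition as set out in the introduction. Alternatively one can invoke the general principle, also stated in the introduction, that full abstraction entails adequacy, but the direct specialization is immediate.

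There is no real obstacle here: the only thing to verify is that the definitions of adequacy and full abstraction used in the paper are indeed the ones obtained by restricting and generalizing, respectively, the biconditional proved in Theorem~\ref{th:eq}. Because Theorem~\ref{th:eq} was itself derived from Theorem~\ref{th:compl} together with Theorem~\ref{th:sound}, citing full completeness as the justification, as the corollary does, is appropriate. The proof is therefore a one-line appeal to Theorem~\ref{th:eq} together with the implications ``fully complete $\Rightarrow$ fully abstract $\Rightarrow$ adequate.''
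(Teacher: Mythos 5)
Your proposal is correct and matches the paper's treatment: the corollary is stated with no proof (just \qed) precisely because, as you observe, full abstraction is the content of Theorem~\ref{th:eq} and adequacy is its restriction to the observable type, following the implication chain laid out in the introduction. Nothing further is needed.
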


\begin{example}\label{ex:numunit}
  Consider the Church numerals based over $\tunit$: they are
  of type
  $(\tunit\to\tunit)\to(\tunit\to\tunit)$. 
  In {\FinSet}, there is only one element since there is only one
  map from $\tunit$ to $\tunit$. 
  As a consequence of
  Theorem~\ref{th:eq}, one can conclude that all Church numerals
  $
  \lambda fx.f(f(\cdots (fx)\cdots))$ of type $(\tunit\to\tunit)\to(\tunit\to\tunit)$
  are operationally equivalent. Note that this is not true
  in general as soon as the type is inhabited by more elements.
\end{example}

\begin{example}\label{ex:numbool}
  How many operationally distinct Church numerals based over $\bit$
  are there\,?  From Theorem~\ref{th:eq}, it is enough to count how
  many distinct denotations of Church numerals there are in
  $\fsdenot{(\bit\to\bit)\to(\bit\to\bit)}$. There
  are exactly 4 distinct maps $\bit\to\bit$. Written as pairs $(x,y)$
  when $f(\ttrue)=x$ and $f(\ffalse)=y$, the maps ${\it tt}$, ${\it
    tf}$, ${\it ft}$ and  ${\it ff}$ are respectively
  $(\ttrue,\ttrue)$,
  $(\ttrue,\ffalse)$,
  $(\ffalse,\ttrue)$ and
  $(\ffalse,\ffalse)$.

  Then, if the Church numeral $\bar{n}$ is written as a tuple
  $(\bar{n}({\it tt}),\bar{n}({\it tf}),\bar{n}({\it ft}),\bar{n}({\it
    ff}))$, we have
    $\bar{0} = ({\it tf},{\it tf},{\it tf},{\it tf})$,
    $\bar{1} {}= ({\it tt},{\it tf},{\it ft},{\it ff})$,
    $\bar{2} {}= ({\it tt},{\it tf},{\it tf},{\it ff})$,
    $\bar{3} {}= ({\it tt},{\it tf},{\it ft},{\it ff})$,
  and one can show that for all $n\geq 1$, $\fsdenot{\bar{n}} =
  \fsdenot{\bar{n+2}}$. There are therefore only 3 operationally
  distinct Church numerals based on the type $\bit$: the number
  $\bar{0}$, then all even non-null numbers, and finally all odd
  numbers.
\end{example}

\section{Finite Vector Spaces}
\label{sec:finvec}

We now turn to the second finitary model that we want to use for
the language ${\bf PCF}_{\it f}$: finite vector spaces. We first start by
reminding the reader about this algebraic
structure.

\subsection{Background definitions}

A {\em field}~\cite{lidl1997finite} $K$ is a commutative ring such that the unit
$0$ of the addition is distinct from the unit $1$ of the
multiplication and such all non-zero elements of $K$ admit an inverse
with respect to the multiplication. A {\em finite field} is a field of
finite size. The {\em characteristic} $q$ of a field $K$ is the
minimum (non-zero) number such that $1 + \cdots + 1 = 0$ ($q$
instances of $1$). If there is none, we say that the
characteristic is $0$. For example, the field of real numbers has
characteristic $0$, while the field $\mathbb{F}_2$ consisting of $0$
and $1$ has characteristic $2$. The {\em order} of a finite field
is the order of its multiplicative group.

A {\em vector space}~\cite{lang} $V$ over a field $K$ is an algebraic structure
consisting of a set $|V|$, a binary addition $+$ and
a scalar multiplication $(\cdot) : K\times V\to V$, satisfying the
equations of Table~\ref{tab:rw-alg} (taken unordered).
The {\em dimension} of a vector
space is the size of the largest set of independent vectors.  A
particular vector space is the vector space {\em freely generated from
  a space $X$}, denoted with $\langle X\rangle$: it consists of all
the formal finite linear combinations $\sum_i\alpha_i\cdot x_i$, where
$x_i$ belongs to $X$ and $\alpha_i$ belongs to $K$. To define a linear
map $f$ on $\langle X\rangle$, it is enough to give its behavior on each
of the vector $x\in X$: the image of $\sum_i\alpha_i\cdot x_i$ is
then by linearity imposed to be $\sum_i\alpha_i\cdot f(x_i)$.

In this paper, the vector spaces we shall concentrate on are {\em
  finite vector spaces}, that is, vector spaces of finite dimensions
over a finite field. For example, the $2$-dimensional space
$\mathbb{F}_2\times\mathbb{F}_2$ consists of the four vectors
$\left(\begin{smallmatrix}
    0\\0
  \end{smallmatrix}\right),
  \left(\begin{smallmatrix}
    0\\1
  \end{smallmatrix}\right),
  \left(\begin{smallmatrix}
    1\\0
  \end{smallmatrix}\right),
  \left(\begin{smallmatrix}
    1\\1
  \end{smallmatrix}\right)
$
and is a finite vector space. It is also the vector space freely
generated from the $2$-elements set $\{\ttrue,\ffalse\}$:
each vectors respectively corresponds to
$
0,\ttrue, \ffalse$, and $\ttrue + \ffalse.
$

Once a given finite field $K$ has been fixed, 
the category {\FinVec} has for objects finite vector spaces over $K$
and for morphisms linear maps between
these spaces. The category is symmetric monoidal closed: the tensor
product is the algebraic tensor product, the unit of the tensor is
$I=K=\langle\star\rangle$ and the internal hom between
two spaces $U$ and $V$ is the vector space of all linear
functions $U\loli V$ between $U$ and $V$. The addition and the scalar
multiplication over functions are pointwise.

\subsection{A linear-non-linear model}
\label{sec:adjunction}

It is well-known~\cite{maclane} that the category of finite sets and functions and
the category of finite vector spaces and linear maps form an adjunction 
\begin{equation}
\label{eq:adj}
\xymatrix{
  \FinSet\ar@/^1ex/[r]^{F} &\FinVec \ar@/^1ex/[l]^{G}.
}
\end{equation}
The functor $F$ sends the set $X$ to the vector space $\langle X \rangle$ freely generated from
$X$ and the set-map $f:X\to Y$ to the linear map sending a basis
element $x\in X$ to the base element $f(x)$. The functor $G$ sends a
vector space $U$ to the same space seen as a set, and consider any
linear function as a set-map from the corresponding sets.

This adjunction makes {\FinVec}
into a model of linear logic~\cite{pratt92mail}. Indeed, the adjunction is symmetric
monoidal with the following two natural transformations:
\[
\begin{array}{l@{}lll}
m_{X,Y} :{}&\langle X\times Y\rangle &\to& \langle X\rangle\otimes \langle
Y\rangle
\\
&(x,y) &\mapsto& x\tensor y,
\end{array}
\quad
\begin{array}{l@{}lll}
m_{1} : {}&\langle 1\rangle &\to& I
\\
&\star&\mapsto& 1.
\end{array}
\]
This makes a {\em linear-non-linear category}~\cite{benton}, equivalent to a
linear category, and is a model of intuitionistic linear
logic~\cite{bierman}.

\subsection{Model of linear logic}

\begin{table*}[b]
  \caption{Modeling the language ${\bf PCF}_{\it f}$ in {\FinVec}.}
  \label{tab:denot-lc-alg}
  \scalebox{.75}{\begin{minipage}{6.3in}
  \[
  \begin{array}{rcl}
    \fvdenot{\Delta,x:A\entail x:A} &:& d\tensor b_a \longmapsto a
    \\[1.2ex]
    \fvdenot{\Delta\entail \ttrue:\bit} &:& d \longmapsto \ttrue
    \\[1.2ex]
    \fvdenot{\Delta\entail \ffalse:\bit} &:& d \longmapsto \ffalse
    \\[1.2ex]
    \fvdenot{\Delta\entail \punit:\tunit} &:& d \longmapsto \punit    
  \end{array}
  \quad
  \begin{array}{rcl}
    \fvdenot{\Delta\entail \pair{M,N}:A\times B} &:& d \longmapsto
    \fvdenot{M}(d)\tensor
    \fvdenot{N}(d)
    \\[1.2ex]
    \fvdenot{\Delta\entail MN:B} &:& d \longmapsto
    \fvdenot{M}(d)(\fvdenot{N}(d))
    \\[1.2ex]
    \fvdenot{\Delta\entail\pi_l(M):A} &=&
    \fvdenot{M} ; \pi_l
    \\[1.2ex]
    \fvdenot{\Delta\entail\pi_r(M):B} &=&
    \fvdenot{M} ; \pi_r
  \end{array}
  \]
  \[
  \begin{array}{rcl}
    \fvdenot{\Delta\entail\lambda x.M:A\to B} &=& d \longmapsto (b_a
    \mapsto \fvdenot{M}(d\tensor b_a))
    \\[1.5ex]
    \fvdenot{\Delta\entail\letunit{M}{N}:A} &=& d \longmapsto 
    \alpha\cdot\fvdenot{N}(d)
    \qquad\textrm{where~~ $\fvdenot{M}(d) =
      \alpha\cdot\star$.}
    \\[1.5ex]
    \fvdenot{\Delta\entail\ifterm{M}{N}{P}:A} &=& d \longmapsto 
    \alpha\cdot\fvdenot{N}(d) + 
    \beta\cdot\fsdenot{P}(d)
    \\[1.2ex]
    &&
    \hspace{0ex}
    \textrm{where~~ $\fvdenot{\Delta\entail M:\bit}(d) =
      \alpha\cdot\ttrue+\beta\cdot\ffalse$.}
  \end{array}
  \]\end{minipage}}
\end{table*}

The adjunction in Eq.~\eqref{eq:adj}
generates a linear comonad on ${\FinVec}$. If $A$ is a
finite vector space, we define the finite vector space $!A$ as the
vector space freely generated from the set $\{b_v\}_{v\in A}$:
it consists of the space
$\langle b_v \bor v\in A \rangle$.
If $f:A\to B$ is a linear map, the map ${!f}:{!A}\to{!B}$ is defined as
$
b_v \mapsto b_{f(v)}.
$
The comultiplication and the counit of the comonad are respectively 
$
\delta_A : !A \to {!!A}
$
and
$
\epsilon_A :  {!A}  \to  A
$
where
$\delta_A(b_v) = b_{b_v}$ and $\epsilon_A(b_v)=v$.
Every element $!A$ is a commutative comonoid when equipped with the
natural transformations
$\Delta_A : {!A}  \to {!A} \tensor {!A}$ and
$\Diamond_A  : {!A}  \to  I$
where
$\Delta_A(b_v) = b_v\tensor b_v$ and $\Diamond(b_v) =  1$.
This makes the category $\FinVec$ into a linear category.

In particular, the coKleisli category ${\FinVec}_{!}$ coming from the
comonad is cartesian closed: the product of $A$ and $B$ is
${A}\times{B}$, the usual product of vector spaces, and the terminal object is 
the vector space $\langle 0\rangle$.
This coKleisli category is the usual
one: the objects are the objects of ${\FinVec}$, and the morphisms
${\FinVec}_{!}(A,B)$ are the morphisms ${\FinVec}(!A,B)$.  The
identity $!A\to A$ is the counit and the composition of $f:{!A}\to B$
and $!B\to C$ is
$
f;g
\defas
{!A}\xrightarrow{\delta_A}{!!A}\xrightarrow{!f}{!B}\xrightarrow{g}C.
$

There is a canonical full embedding $E$ of categories sending
${\FinVec}_{!}$ on ${\FinSet}$.
It sends an object $U$ to the set of vectors of $U$ (i.e. it acts
as the forgetful functor on objects) and sends the linear map $f:{!U}\to V$ to the map
$v\mapsto f(b_v)$.

This functor preserves the cartesian closed structure: the terminal
object 
$\langle0\rangle$ of
$\FinVec_!$ is sent to the set containing only $0$, that is, the singleton-set
$\tunit$. The product space $U\times V$ is sent to the set of vectors
$\{\pair{u,v}\bor u\in{}U, v\in{}V\}$, which is exactly the set-product of $U$
and $V$. Finally, the function space $!U\to V$ is in exact correspondence with
the set of set-functions $U\to V$.

\begin{remark}\label{rem:ext-prod}
  The construction proposed as side example by Hyland and Schalk~\cite{hyland03glueing} 
  considers finite vector spaces with a field of characteristic~$2$. There,
  the modality is built using the exterior product algebra, and it turns out to
  be identical to the functor we use in the present paper. Note though,
  that their construction does not work with
  fields of other characteristics.
\end{remark}

\begin{remark}
  Quantitative models of linear logic such as finiteness
  spaces~\cite{finiteness} are also based on vector spaces;
  however, in these cases the procedure to build a comonad
  does not play well with the finite dimension the vector spaces
  considered in this paper: the definition of the comultiplication
  assumes that the space $!A$ is infinitely dimensional.
\end{remark}

\subsection{Finite vector spaces as a model}
\label{sec:lc-finvec-model}

Since ${\FinVec}_!$ is a cartesian closed category, one can model
terms of ${\bf PCF}_{\it f}$ as linear maps. Types are interpreted as
follows. The unit type is $\fvdenot{\tunit} \defas \{\,
\alpha\cdot\star\bor \alpha\in K \,\}$. The boolean type is
$\fvdenot{\bit} \defas \{\, {\textstyle\sum_i} \alpha_i\cdot\ttrue +
\beta_i\cdot\ffalse \bor\alpha_i,\beta_i\in K\,\}$. The product is the
usual product space: $\fvdenot{A\times B} \defas
\fvdenot{A}\,\times\,\fvdenot{B}$, whereas the arrow type is
$\fvdenot{A\to B} \defas {\FinVec}({!\fvdenot{A}},\fvdenot{B})$.
A typing judgment $x_1:A_1,\ldots,x_n:A_n\vdash M:B$ is represented
by a morphism of $\FinVec$ of type
\begin{equation}
  \label{eq:map}
  !\fvdenot{A_1}\otimes\cdots\otimes{!}\fvdenot{A_n}\longrightarrow
  \fvdenot{B},
\end{equation}
inductively defined as in Table~\ref{tab:denot-lc-alg}. The
variable $d$ stands for a base element $b_{u_1}\tensor\ldots\tensor
b_{u_n}$ of $\fvdenot{\Delta}$, and $b_a$ is a base element of
$\fvdenot{A}$. The functions $\pi_l$ and $\pi_r$ are the left and
right projections of the product.

  Note that because of the equivalence between ${!(A\times B)}$ and
  ${!A}\otimes {!B}$, the map in Eq.~\eqref{eq:map} is a morphism of
  ${\FinVec}_!$, as desired.

\refstepcounter{theorem}\label{ex:numunit-alg}%
\medskip
\noindent
{\bf Example~\ref{ex:numunit-alg}.}~
In {\FinSet}, there was only one Church numeral based on type
  $\tunit$. In ${\FinVec}_!$, there are more elements in the
  corresponding space
  $!({!\tunit}\loli\tunit)\loli({!\tunit}\loli\tunit)$ and we get more
  distinct Church numerals.

  Assume that the finite field under consideration is the 2-elements
  field $\mathbb{F}_2=\{0,1\}$. Then
  $
  \fvdenot{\tunit} = \tunit = \{0\cdot\star,1\cdot\star\} = \{0,\star\}.
  $
  The space $!\tunit$ is freely generated from the vectors of $\tunit$: it
  therefore consists of just the four vectors $\{0,b_0,b_\star,b_0+b_\star\}$.
  The space of morphisms $\fvdenot{\tunit\to\tunit}$ is the space
  ${!\tunit}\loli\tunit$. It is generated by two functions:
  $f_0$ sending $b_0$ to $\star$ and $b_\star$ to $0$, and $f_\star$
  sending $b_v$ to $v$. The space 
  therefore also contains 4 vectors: $0$, $f_0$, $f_\star$ and
  $f_0+f_\star$. Finally, the vector space ${!}({!\tunit}\loli\tunit)$
  is freely generated from the 4 base elements $b_0$, $b_{f_0}$,
  $b_{f_\star}$ and $b_{f_0+f_\star}$, therefore containing $16$
  vectors. Morphisms
  $!({!\tunit}\loli\tunit)\loli({!\tunit}\loli\tunit)$ can be
  represented by $2\times 4$ matrices with coefficients in
  $\mathbb{F}_2$. 
\begin{wrapfigure}{r}{0.24\textwidth}
\begin{minipage}{.2\textwidth}
  $\xy
  <1.5ex,0ex>:
  (1,0)*{\cdot};
  (0,1)*{\Big(\Big.};
  (1,2)*{\cdot};
  (2.5,0)*{\cdot};
  (2.5,2)*{\cdot};
  (4,0)*{\cdot};
  (4,2)*{\cdot};
  (5.5,0)*{\cdot};
  (5.5,2)*{\cdot};
  (6.5,1)*{\Big.\Big)};
  (8,0.1)*\txt{{\scriptsize $f_\star$}};
  (8,2.1)*\txt{{\scriptsize $f_0$}};
  (0.7,-1.2)*!L\txt{{\scriptsize $b_{0}$}};
  (2.2,-1.2)*!L\txt{{\scriptsize $b_{f_0}$}};
  (3.7,-1.2)*!L\txt{{\scriptsize $b_{f_\star}$}};
  (5.2,-1.2)*!L\txt{{\scriptsize $b_{f_0+f_\star}$}};
  \endxy$
\end{minipage}
\end{wrapfigure}
The basis elements $b_v$ are ordered as above, as
  are the basis elements $f_w$, as shown on the right.
  The Church numeral $\bar{0}$ sends all of its arguments to the
  identity function, that is, $f_\star$. The Church numeral $\bar{1}$
  is the identity. So their respective matrices are
  $\left(\begin{smallmatrix}
    0&0&0&0\\
    1&1&1&1
  \end{smallmatrix}\right)$
  and
  $\left(\begin{smallmatrix}
    0&1&0&1\\
    0&0&1&1
  \end{smallmatrix}\right)$.
  The next two Church numerals are
  $\bar{2} = 
  \left(\begin{smallmatrix}
    0&0&0&1
    \\
    0&1&1&1
  \end{smallmatrix}\right)$ and
  $\bar{3} = 
  \left(\begin{smallmatrix}
    0&1&0&1\\
    0&0&1&1
  \end{smallmatrix}\right)$, which is also $\bar{1}$.
  So ${\FinVec}_!$ with the field of characteristic $2$ distinguishes
  null, even and odds numerals over the type $!\tunit$.

  Note that this characterization is similar to the ${\FinSet}$ 
  Example~\ref{ex:numbool}, except that there, the type over which the
  Church numerals were built was $\bit$. Over $\tunit$,
  Example~\ref{ex:numunit} stated that all Church numerals collapse.

\begin{example}
  \label{ex:numunit-alg2}
  The fact that ${\FinVec}_!$ with the field of characteristic 2 can
  be put in parallel with {\FinSet} when considering Church numerals
  is an artifact of the fact that the field has only two elements. If
  instead one chooses another field
  $K=\mathbb{F}_p=\{0,1,\ldots,p-1\}$ of
  characteristic $p$, with $p$ prime,
  then this is in general not true anymore. In this case,
  $\fvdenot{\tunit}=\{0,\star,2\cdot\star,\ldots,(p-1)\cdot\star\}$, and ${!\tunit}\loli\tunit$
  has dimension $p$ with basis elements 
  $f_i$ sending $b_{i\cdot\star}\mapsto \star$ and $b_{j\cdot\star} \mapsto 0$ when $i\neq j$.
  It therefore consists of $p^p$ vectors. Let us represent a function
  $f:{!\tunit}\loli\tunit$ with ${\tt x_0\ldots{}x_{p-1}}$ where
  $f(b_{i\cdot\star})={\tt x}_i\cdot\star$. A morphism
  $!({!\tunit}\loli\tunit)\loli({!\tunit}\loli\tunit)$ can be
  represented with a $p^p\times p$ matrix. The basis elements $b_{\tt
    x_0\ldots{}x_{p-1}}$ of $!({!\tunit}\loli\tunit)$ are ordered
  lexicographically:
  $
  b_{0\ldots{}00}, b_{0\ldots01}, b_{0\ldots02}, \ldots, 
  b_{0\ldots0(p-1)}, \ldots, b_{(p-1)\ldots(p-1)},
  $
  as are the basis elements $f_0,f_1,\ldots,f_{p-1}$.
\begin{table*}[t]
  \caption{The $8$ Church numerals over type $\tunit$ in ${\FinVec}_!$ with $K=\mathbb{F}_3$.}
  \label{tab:sevenmat}
  \scalebox{.84}{\begin{minipage}{5.67in}
  \[
  \begin{array}{ll}
  \scalebox{.8}{$\overline{0} = {}$}\\ \left(
    \begin{smallmatrix}
      0&0&0&0&0&0&0&0&0&0&0&0&0&0&0&0&0&0&0&0&0&0&0&0&0&0&0\\
      1&1&1&1&1&1&1&1&1&1&1&1&1&1&1&1&1&1&1&1&1&1&1&1&1&1&1\\
      2&2&2&2&2&2&2&2&2&2&2&2&2&2&2&2&2&2&2&2&2&2&2&2&2&2&2
    \end{smallmatrix}
  \right)
  \end{array}
  \begin{array}{ll}
  \scalebox{.8}{$\overline{1} = {}$}\\ \left(
    \begin{smallmatrix}
      0&0&0&0&0&0&0&0&0&1&1&1&1&1&1&1&1&1&2&2&2&2&2&2&2&2&2\\
      0&0&0&1&1&1&2&2&2&0&0&0&1&1&1&2&2&2&0&0&0&1&1&2&2&2&2\\
      0&1&2&0&1&2&0&1&2&0&1&2&0&1&2&0&1&2&0&1&2&0&1&2&0&1&2
    \end{smallmatrix}
  \right)
  \end{array}
  \]
  \[
  \begin{array}{ll}
  \scalebox{.8}{$\overline{2+6n} = {}$}\\ \left(
    \begin{smallmatrix}
      0&0&0&0&0&0&0&0&0&0&0&0&1&1&1&2&2&2&0&1&2&0&1&2&0&1&2\\
      0&0&0&1&1&1&0&1&2&1&1&1&1&1&1&0&1&2&2&2&2&1&1&1&0&1&2\\
      0&0&2&0&1&2&0&2&2&1&0&2&1&1&2&1&2&2&2&0&2&2&1&2&2&2&2
    \end{smallmatrix}
  \right)
  \end{array}
  \begin{array}{ll}
    \scalebox{.8}{$\overline{3+6n} = {}$}\\ \left(
    \begin{smallmatrix}
      0&0&0&0&0&0&0&0&0&1&1&1&1&1&1&0&1&2&2&0&2&2&1&2&2&2&2\\
      0&0&0&1&1&1&0&2&2&0&0&0&1&1&1&1&2&2&0&1&2&1&1&1&2&2&2\\
      0&0&2&0&1&2&0&1&2&0&1&2&1&1&2&2&1&2&0&2&2&0&1&2&0&1&2
    \end{smallmatrix}
  \right)
  \end{array}
  \]
  \[
  \begin{array}{ll}
  \scalebox{.8}{$\overline{4+6n} = {}$}\\ \left(
    \begin{smallmatrix}
      0&0&0&0&0&0&0&0&0&0&0&0&1&1&1&1&2&2&0&2&2&0&1&2&0&1&2\\
      0&0&0&1&1&1&0&1&2&1&1&1&1&1&1&2&1&2&2&0&2&1&1&1&0&1&2\\
      0&0&2&0&1&2&0&2&2&1&0&2&1&1&2&0&2&2&2&1&2&2&1&2&2&2&2
    \end{smallmatrix}
  \right)
  \end{array}
  \begin{array}{ll}
    \scalebox{.8}{$\overline{5+6n} = {}$}\\ \left(
    \begin{smallmatrix}
      0&0&0&0&0&0&0&0&0&1&1&1&1&1&1&2&1&2&2&1&2&2&1&2&2&2&2\\
      0&0&0&1&1&1&0&2&2&0&0&0&1&1&1&0&2&2&0&2&2&1&1&1&2&2&2\\
      0&0&2&0&1&2&0&1&2&0&1&2&1&1&2&1&1&2&0&0&2&0&1&2&0&1&2
    \end{smallmatrix}
  \right)
  \end{array}
  \]
  \[
  \begin{array}{ll}
  \scalebox{.8}{$\overline{6+6n} = {}$}\\ \left(
    \begin{smallmatrix}
      0&0&0&0&0&0&0&0&0&0&0&0&1&1&1&0&2&2&0&0&2&0&1&2&0&1&2\\
      0&0&0&1&1&1&0&1&2&1&1&1&1&1&1&1&1&2&2&1&2&1&1&1&0&1&2\\
      0&0&2&0&1&2&0&2&2&1&0&2&1&1&2&2&2&2&2&2&2&2&1&2&2&2&2
    \end{smallmatrix}
  \right)
  \end{array}
  \begin{array}{ll}
    \scalebox{.8}{$\overline{7+6n} = {}$}\\ \left(
    \begin{smallmatrix}
      0&0&0&0&0&0&0&0&0&1&1&1&1&1&1&1&1&2&2&2&2&2&1&2&2&2&2\\
      0&0&0&1&1&1&0&2&2&0&0&0&1&1&1&2&2&2&0&0&2&1&1&1&2&2&2\\
      0&0&2&0&1&2&0&1&2&0&1&2&1&1&2&0&1&2&0&1&2&0&1&2&0&1&2
    \end{smallmatrix}
  \right)
  \end{array}
  \]\end{minipage}}
\end{table*}%

  The Church numeral $\bar{0}$ is again the constant function
  returning the identity, that is, $\sum_i i\cdot{}f_i$.  The numeral
  $\bar{1}$ sends ${\tt x_0\cdots{}x_{p-1}}$ onto the function sending
  $b_{i\cdot\star}$ onto ${\tt x}_i\cdot\star$. The numeral $\bar{2}$
  sends ${\tt x_0\cdots{}x_{p-1}}$ onto the function sending $b_{i\cdot\star}$
  onto ${\tt x}_{{\tt x}_i}\cdot\star$. The numeral $\bar{3}$ sends
  ${\tt x_0\cdots{}x_{p-1}}$ onto the function sending $b_{i\cdot\star}$ onto
  ${\tt x}_{{\tt x}_{{\tt x}_{{\tt x}_i}}}\cdot\star$. And so on.
  
  In particular, each combination ${\tt x_0}\cdots{}{\tt x}_{p-1}$ can
  be considered as a function ${\tt x}:\{0,\ldots p-1\}\to\{0,\ldots
  p-1\}$. The sequence $({\tt x}^0,{\tt x}^1,{\tt x}^2,\ldots)$
  eventually loops. The order
  of the loop is ${\it lcm}(p)$, the least common multiple of all
  integers $1,\ldots,p$, and for all $n\geq p-1$ we have ${\tt x}^n={\tt
    x}^{n+{\it lcm}(p)}$: there are ${\it lcm}(p) + p - 1$ distinct
  Church numerals in the model
  ${\FinVec}_!$ with a field of characteristic $p$ prime.

  For $p=2$ we recover the $3$ distinct Church numerals. But for
  $p=3$, we deduce that there are $8$ distinct Church numerals (the
  $8$ corresponding matrices are reproduced in
  Table~\ref{tab:sevenmat}). As there is almost a factorial function, the number of
  distinct Church numerals grows fast as $p$ grows: With $\mathbb{F}_5$, there
  are $64$ distinct numerals, and with $\mathbb{F}_7$ there are $426$ distinct
  numerals.
\end{example}

\begin{example}
  \label{ex:numbool-alg}
  Let us briefly reprise Example~\ref{ex:numbool} in
  the context of ${\FinVec}_!$. Even with a field of characteristic
  $2$, the vector space $\fvdenot{(\bit\to\bit)\to(\bit\to\bit)}$ is
  relatively
  large: $\bit$ has dimension 2 and consists of $4$ vectors, ${!\bit}$
  then has dimension $4$ and consists of $16$ vectors. The dimension
  of the homset ${!\bit}\loli\bit$ is $8$, and it contains $2^{8}=256$
  vectors. Using the representation of the two previous examples, a
  Church numeral is then a matrix of size $256\times 8$.
  
  Let us represent a function ${!\bit}\loli\bit$ as a tuple $({\tt
    x}^k_{ij})_{i,j,k}$ lexicographically ordered
  $
  {\tt x}^0_{00}, {\tt x}^1_{00}, {\tt x}^0_{01}, {\tt x}^1_{01}, {\tt
    x}^0_{10}, {\tt x}^1_{10}, {\tt x}^0_{11}, {\tt x}^1_{11},
  $
  representing the map sending $b_{i\cdot\ttrue+j\cdot\ffalse}$ to
  ${\tt x}^0_{ij}\cdot\ttrue + {\tt x}^1_{ij}\cdot\ffalse$.
  These form the basis elements of the range of the matrix. The
  domain of the matrix consists of all the $256$ combinations of 0/1
  values that these can take. Ordered lexicographically, they form the
  basis of the domain of the matrix.
  
  As before, the Church numeral $\bar0$ is constant while 
  $\bar{1}$ is the identity. 
  The numeral $\bar{2}$ sends each of the $8$-tuples $({\tt
    x}^k_{ij})_{i,j,k}$ to the $8$-tuple
  $
  (x^0_{x^0_{a,b},{\tt x}^1_{a,b}},{\tt x}^1_{x^0_{a,b},{\tt x}^1_{a,b}})_{a,b\in\{0,1\}},
  $
  and so forth.
  So for example, the negation sending $b_{a\cdot\ttrue +
    b\cdot\ffalse}$ to $a\cdot\ffalse + b\cdot\ttrue$ is the $8$-tuple
  $(0,0,1,0,0,1,1,1)$ and is sent by $\bar{2}$ to the tuple
  $(0,0,0,1,1,0,1,1)$ which is indeed the identity.
  
  If one performs the calculation, one finds out that in
  ${\FinVec}_!$, over the type $\bit$, there are exactly $15$ distinct
  Church numerals. The numerals $\bar0$, $\bar1$ and $\bar2$ are
  uniquely determined, and then the semantics distinguishes the
  equivalence classes $\{i+12n\bor n\in\mathbb{N}\}$, for
  $i=3,4,\ldots14$. The $14$ non-constant Church numerals are
  represented in Table~\ref{tab:14num}: First column contains numbers $1$
  to $7$, second columns numbers $8$ to $14$. The matrices are represented
  as rectangles made of $256\times8$ squares. Black squares mean $1$
  and white squares mean $0$.
\end{example}

\subsection{Properties of the \texorpdfstring{${\FinVec}$}{FinVec} Model}

\begin{table*}[b]
  \caption{The $14$ non-constant Church numerals over $\bit$ in ${\FinVec}_!$ with $K=\mathbb{F}_2$.}
  \label{tab:14num}
  \def\b{\rule{0.026cm}{0.03cm}}
  \def\a{\phantom{\rule{0.026cm}{0.03cm}}}
  \def\mynl{\\[-2.5ex]}
  \def\mynnl{\\[-1ex]}
  \scalebox{.89}{\begin{minipage}{5.32in}\[

\end{array}
\]
\end{minipage}}
\end{table*}

As shown in the next results, this semantics is both sound and
adequate with respect to the operational equivalence.
Usually adequacy uses non-terminating terms. Because the
language is strongly normalizing, we adapt the notion.
However, because there are usually more maps between $\fvdenot{A}$ and
$\fvdenot{B}$ than between $\fsdenot{A}$ and $\fsdenot{B}$ (as shown
in Examples~\ref{ex:numunit-alg}, \ref{ex:numunit-alg2}
and~\ref{ex:numbool-alg}), the model fails to be fully abstract.

\begin{lemma}
  \label{lem:axdenot-st-alg}
  If $M\axeq N:A$ then $\fvdenot{M} =
  \fvdenot{N}$.\qed
\end{lemma}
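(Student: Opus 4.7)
The plan is to prove the lemma by induction on the derivation of $M \axeq N$. The reflexivity, symmetry, and transitivity cases are immediate, since equality of linear maps is itself an equivalence relation. For the fully-congruent closure, I would note that each clause of Table~\ref{tab:denot-lc-alg} defines $\fvdenot{-}$ as a function of the denotations of its immediate subterms, so a straightforward structural induction shows that $\fvdenot{M}=\fvdenot{M'}$ implies $\fvdenot{C[M]}=\fvdenot{C[M']}$ for every context $C[-]$.

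The remaining work is to check, axiom by axiom, that each rule of Table~\ref{tab:cbn} together with the two $\eta$-rules preserves the denotation. Most are direct computations from the definitions: the projection rules $\pi_i\pair{M_1,M_2}\to M_i$ collapse by the product structure; $\letunit{\star}{M}\to M$ works because $\fvdenot{\star}(d)=1\cdot\star$, so the scalar $\alpha$ in the let-clause equals $1$; the if-rules work because $\fvdenot{\ttrue}(d)=1\cdot\ttrue+0\cdot\ffalse$ and dually for $\ffalse$, so the linear combination $\alpha\cdot\fvdenot{N}+\beta\cdot\fvdenot{P}$ picks out the correct branch. The two $\eta$-rules $\lambda x.Mx\axeq M$ and $\pair{\pi_l(M),\pi_r(M)}\axeq M$ follow from the fact that $\FinVec_!$ is cartesian closed with the specified product and function space, so the interpretations are determined by their universal properties.

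The main obstacle is the $\beta$-rule $(\lambda x.M)N\to M[N/x]$, which requires a substitution lemma of the form
\[
\fvdenot{\Delta\vdash M[N/x]:B} \;=\; \bigl(!\fvdenot{\Delta}\xrightarrow{\Delta}{!}\fvdenot{\Delta}\otimes {!}\fvdenot{\Delta}\xrightarrow{\mathrm{id}\,\otimes\,(\delta;{!}\fvdenot{N})}{!}\fvdenot{\Delta}\otimes{!}\fvdenot{A}\xrightarrow{\fvdenot{M}}\fvdenot{B}\bigr).
\]
This must be stated via the $!$ comonad because the interpretation of a variable is the counit $\epsilon$, and substituting an arbitrary (possibly non-basis) vector $\fvdenot{N}(d)$ for $x$ requires lifting $\fvdenot{N}$ through $!$ and then contracting along $\Delta_\Delta$. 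I would prove this lemma by induction on $M$, using the comonoid equations for $\Delta_A$ and $\Diamond_A$ to handle the cases where $x$ appears in multiple subterms or in none. Once the substitution lemma is in hand, the $\beta$ case follows by unfolding $\fvdenot{(\lambda x.M)N}$ using the application and $\lambda$ clauses of Table~\ref{tab:denot-lc-alg} and comparing with the expression above; this matches precisely because the coKleisli structure on $\FinVec$ is exactly the one induced by the linear-non-linear adjunction of Section~\ref{sec:adjunction}.
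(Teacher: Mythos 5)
Your proposal is correct and is exactly the routine argument the paper leaves implicit (the lemma is stated with its proof omitted): induction on the derivation of $\axeq$, compositionality for the congruence closure, direct computation for each rewrite rule and the two $\eta$-rules, and a substitution lemma phrased via the comonoid structure $\Delta_A$, $\Diamond_A$ and the coKleisli lifting $\delta;{!}\fvdenot{N}$ to handle $\beta$-reduction. The composite you give for $\fvdenot{M[N/x]}$ is the standard one for coKleisli semantics of a linear category and does match the application and abstraction clauses of Table~\ref{tab:denot-lc-alg}, so no gap remains.
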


\begin{theorem}
  \label{th:sound-st-alg}
  If $\Delta\entail M,N:A$ and $\fvdenot{M}=\fvdenot{N}$ then 
  $M\opeq N$.
\end{theorem}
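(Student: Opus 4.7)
The plan is to mirror the proof of Theorem~\ref{th:sound} step by step, swapping the set-theoretic interpretation $\fsdenot{\cdot}$ for the vector-space interpretation $\fvdenot{\cdot}$, since the algebraic ingredients it relies upon (the axiomatic equivalence, strong normalization, and soundness of $\axeq$ with respect to the denotation) are already in place via Lemma~\ref{lem:opax} and Lemma~\ref{lem:axdenot-st-alg}. I will proceed by contrapositive: assume $M \not\opeq N$ and derive $\fvdenot{M} \neq \fvdenot{N}$.

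First, by definition of $\opeq$ and Lemma~\ref{lem:sn}, there is a closed context $C[-]$ of boolean type whose hole binds $\Delta = \{x_i:A_i\}_i$ and such that, up to swapping $M$ and $N$, we have $C[M] \to^* \ttrue$ while $C[N] \to^* \ffalse$. As in the ${\FinSet}$ proof, abstract the free variables away and form the closed terms $T_P \defas (\lambda z. C[z\,x_1\ldots x_n])(\lambda x_1\ldots x_n. P)$; these satisfy $T_M \to^* \ttrue$ and $T_N \to^* \ffalse$. Iterating Lemma~\ref{lem:opax} along each reduction step and using that $\axeq$ is reflexive-symmetric-transitive, we obtain $T_M \axeq \ttrue$ and $T_N \axeq \ffalse$.

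Next, Lemma~\ref{lem:axdenot-st-alg} yields $\fvdenot{T_M} = \fvdenot{\ttrue}$ and $\fvdenot{T_N} = \fvdenot{\ffalse}$, which are distinct elements of $\fvdenot{\bit}$ since $\ttrue$ and $\ffalse$ are independent basis vectors by construction of the interpretation. Now suppose, for contradiction, that $\fvdenot{M} = \fvdenot{N}$. By compositionality of $\fvdenot{\cdot}$ (which is built clause by clause in Table~\ref{tab:denot-lc-alg} so that the denotation of a term only depends on the denotations of its subterms), one concludes inductively on the context that $\fvdenot{\lambda x_1\ldots x_n. M} = \fvdenot{\lambda x_1\ldots x_n. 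N}$ and therefore $\fvdenot{T_M} = \fvdenot{T_N}$, contradicting the inequality $\fvdenot{\ttrue}\neq\fvdenot{\ffalse}$.

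The only nontrivial steps in the plan are the two ingredients that make the argument run in the linear-non-linear setting: compositionality of $\fvdenot{\cdot}$ under arbitrary contexts (needed to propagate $\fvdenot{M}=\fvdenot{N}$ through $C[-]$), and the linear independence of $\ttrue$ and $\ffalse$ in $\fvdenot{\bit}$ (needed to conclude that $\fvdenot{\ttrue}$ and $\fvdenot{\ffalse}$ are actually different). The former is a routine induction relying on the definition in Table~\ref{tab:denot-lc-alg}, and the latter is immediate from the fact that $\fvdenot{\bit}$ is the free two-dimensional space on $\{\ttrue,\ffalse\}$. Beyond these, the proof is essentially the one for $\FinSet$ transported to $\FinVec_!$.
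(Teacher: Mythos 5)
Your proof is correct and follows essentially the same route as the paper, which likewise argues by contrapositive, reusing the structure of the $\FinSet$ soundness proof together with Lemmas~\ref{lem:sn}, \ref{lem:opax} and~\ref{lem:axdenot-st-alg}. The extra details you supply (compositionality of $\fvdenot{\cdot}$ and the linear independence of $\ttrue$ and $\ffalse$ in $\fvdenot{\bit}$) are exactly the points the paper leaves implicit.
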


\begin{proof}
  The proof is similar to the proof of Theorem~\ref{th:sound} and
  proceeds by contrapositive, using Lemmas~\ref{lem:uniq-value},
  \ref{lem:sn}, \ref{lem:opax} and~\ref{lem:axdenot-st-alg}.
\end{proof}

\begin{theorem}[Adequacy]
  Given two closed terms $M$ and $N$ of type $\bit$, $\fvdenot{M}=\fvdenot{N}$ if and
  only if $M\opeq N$.
\end{theorem}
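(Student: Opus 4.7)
The statement is an ``if and only if'', so the plan is to treat the two directions separately.

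The left-to-right direction is immediate: this is exactly the soundness statement of Theorem~\ref{th:sound-st-alg} specialised to closed terms of type $\bit$, so it requires no new argument.

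For the right-to-left direction, the plan is to exploit the extreme rigidity of the observable type $\bit$ together with strong normalisation. Assume $M \opeq N$ with $\vdash M, N : \bit$. By Lemma~\ref{lem:sn} the language is strongly normalising, so both $M$ and $N$ reduce in finitely many steps to a closed normal form of type $\bit$. By Lemma~\ref{lem:uniq-value}~(2), the only closed values of type $\bit$ are $\ttrue$ and $\ffalse$, so there exist $b_M, b_N \in \{\ttrue,\ffalse\}$ with $M \to^* b_M$ and $N \to^* b_N$. Now instantiate the definition of operational equivalence with the trivial context $C[-] = [-]$ (which is a closed context of type $\bit$ binding the empty context $\Delta = \emptyset$): we obtain $C[M] \to^* b$ iff $C[N] \to^* b$ for each boolean $b$, forcing $b_M = b_N$. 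Call this common value $b$.

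From $M \to^* b$ and $N \to^* b$, Lemma~\ref{lem:opax} (together with the fact that $\axeq$ is reflexive, symmetric, transitive, and contains $\to$) yields $M \axeq b$ and $N \axeq b$, hence $M \axeq N$. Applying Lemma~\ref{lem:axdenot-st-alg} we conclude $\fvdenot{M} = \fvdenot{N}$, which is what we needed.

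No step is really an obstacle here; the whole argument is a standard ``reduce to a ground value'' observation. The only small subtlety to double-check is that the bare hole $[-]$ qualifies as an admissible closed context of type $\bit$ in the sense of Section~\ref{sec:lc-cat}, but this is built into the grammar of contexts given there. The key structural reason the result goes through is that at the ground type $\bit$, operational equivalence collapses to the (syntactic) equality of the unique boolean normal form, and axiomatic equivalence is closed under reduction, so soundness of $\axeq$ for $\fvdenot{-}$ transports the equality into the model.
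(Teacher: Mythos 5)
Your proof is correct and follows essentially the same route as the paper's: left-to-right is Theorem~\ref{th:sound-st-alg}, and right-to-left uses the empty context $[-]$, strong normalisation and the classification of closed boolean values to obtain a common normal form $b$, then Lemma~\ref{lem:opax} and Lemma~\ref{lem:axdenot-st-alg} to transport $M \axeq b \axeq N$ into the model. You merely spell out in more detail the step the paper compresses (why each term reaches a unique boolean), which is fine.
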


\begin{proof}
  The left-to-right direction is Theorem~\ref{th:sound-st-alg}. For
  the right-to-left direction, since the terms $M$ and $N$ are closed
  of type $\bit$, one can choose the context $C[-]$ to be $[-]$, and
  we have $M\to^* b$ if and only if $N\to^* b$. From
  Lemma~\ref{lem:sn}, there exists such a boolean $b$: we deduce from
  Lemma~\ref{lem:opax} that $M\axeq N$. We conclude with
  Lemma~\ref{lem:axdenot-st-alg}.
\end{proof}

\begin{remark}
\label{rem:notfullyabst}
The model ${\FinVec}_!$ is not fully abstract. 
Indeed, consider the two valid typing judgments
$
x:\bit\entail\ttrue:\bit
$
and
$
x:\bit\entail\ifterm{x}{\ttrue}{\ttrue}:\bit
$.
The denotations of both of these judgments are linear maps 
$!\fvdenot{\bit}\to\fvdenot{\bit}$. According to the rules of
Table~\ref{tab:denot-lc-alg}, the denotation of the first term is 
the constant function sending all non-zero
vectors $b_{-}$ to $\ttrue$.

For the second term,
suppose that $v\in{!\fvdenot{\bit}}$ is 
equal to
$\sum_i\gamma_i\cdot{}b_{\alpha_i\cdot\ttrue + \beta_i\cdot\ffalse}$.
Let $\nu = \sum_i\gamma_i(\alpha_i + \beta_i)$.
Then since $\fvdenot{x:\bit\entail x:\bit}(v) = \nu$, 
the denotation of the second term is the function sending $v$
to $\nu\cdot\fvdenot{x:\bit\entail\ttrue:\bit}(v)$, equal to
$\nu\cdot\ttrue$ from what we just discussed.
We conclude that if $v=b_0$, then $\nu=0$: the denotation of
$x:\bit\entail\ifterm{x}{\ttrue}{\ttrue}:\bit$ sends $b_0$ to $0$.

Nonetheless, they are clearly operationally equivalent in ${\bf
  PCF}_{\it f}$ since their denotation in {\FinSet} is the same.
The language is not expressive enough to distinguish between these two
functions. Note that there exists operational settings where these
would actually be different, for example if we were to allow divergence.
\end{remark}

\begin{remark}
Given a term $A$, another question one could ask is whether the set of
terms $M:A$ in ${\bf PCF}_{\it f}$ generates a free family of vectors in the
vector space $\fvdenot{A}$. It turns
out not: The field structure brought into the model introduces
interferences, and algebraic sums coming from operationally distinct
terms may collapse to a representable element. For example, supposing
for simplicity that the characteristic of the field is $q=2$, consider
the terms $T_{\ttrue,\ttrue}$, $T_{\ffalse,\ffalse}$,
$T_{\ttrue,\ffalse}$ and $T_{\ffalse,\ttrue}$ defined as $T_{{\tt
    y},{\tt z}} = \lambda x.\ifterm{x}{\tt y}{\tt z}$,
all of types $\bit\to\bit$. They are clearly operationally distinct,
and their denotations live in ${!\bit}\loli\bit$. They can be
written as a $2\times 4$ matrices along the bases
$(b_{0},b_{\ttrue},b_{\ffalse},b_{\ttrue+\ffalse})$ for the domain and
$(\ttrue,\ffalse)$ for the range. The respective images of the $4$
terms are
$\left(\begin{smallmatrix}
0&1&1&0\\
0&0&0&0
\end{smallmatrix}\right)$
,
$\left(\begin{smallmatrix}
0&0&0&0\\
0&1&1&0
\end{smallmatrix}\right)$,
$\left(\begin{smallmatrix}
0&1&0&1\\
0&0&1&1
\end{smallmatrix}\right)$,
$\left(\begin{smallmatrix}
0&0&1&1\\
0&1&0&1
\end{smallmatrix}\right)$
and clearly,
$
\fvdenot{T_{\ttrue,\ttrue}}
=
\fvdenot{T_{\ffalse,\ffalse}}
+
\fvdenot{T_{\ttrue,\ffalse}}
+
\fvdenot{T_{\ffalse,\ttrue}}.
$

So if the model we are interested in is ${\FinVec_!}$, the language is
missing some structure to correctly handle the algebraicity.
\end{remark}

\section{An algebraic lambda-calculus}
\label{sec:alglc}

\begin{table}[t]
    \caption{Rewrite system for the algebraic fragment of
      ${\bf PCF}_{\it f}^{\it alg}$.}
    \label{tab:rw-alg}
\scalebox{.83}{\begin{minipage}{5.7in}
\begin{align*}
  \alpha\cdot M + \beta\cdot M &\to (\alpha+\beta)\cdot M
  &
  M + N &\to N + M
  &
  (M + N) + P &\to M + (N + P)
  \\
  \alpha\cdot M + M &\to (\alpha + 1)\cdot M
  &
  0\cdot M \to 0&\qquad 1\cdot M \to M
  &
  \alpha\cdot(M + N) &\to \alpha\cdot M + \alpha\cdot N
  \\
  M + M &\to (1+1)\cdot M
  &
  \alpha\cdot 0 \to 0&\qquad 0 + M \to M
  &
  \alpha\cdot(\beta\cdot M) &\to (\alpha\beta)\cdot M
  &
\end{align*}\end{minipage}}
\end{table}

To solve the problem, we extend the language ${\bf PCF}_{\it f}$ by adding an
algebraic structure to mimic the notion of linear distribution
existing in ${\FinVec_!}$. The extended language ${\bf PCF}_{\it f}^{\it alg}$
is a call-by-name variation of~\cite{ad08,adv11} and reads as follows:
\[\begin{array}{rl}
  M,N,P ~ {:}{:}{=}~ & 
  x \bor \lambda x.M \bor MN \bor
  \pi_l(M) \bor \pi_r(M) \bor \pair{M,N}\bor\punit\bor
  \ttrue \bor \ffalse \bor\\& \ifterm{M}{N}{P} \bor
  \letunit{M}{N}\bor
  0 \bor M + N \bor \alpha\cdot M,
  \\
  A,B ~ {:}{:}{=}~ &
  \tunit\bor\bit \bor A \to B \bor A \times B.
\end{array}\]
The scalar $\alpha$ ranges over the field.
The values are now
$U,V~{:}{:}{=}~
x \borsmall \lambda x.M\borsmall\pair{M,N}\borsmall$
$\punit\borsmall\ttrue \borsmall \ffalse \borsmall
0 \borsmall U + V \borsmall \alpha\cdot U.
$.
The typing rules are the same for the regular constructs. The new
constructs are typed as follows: for all $A$, $\Delta\vdash 0 : A$, 
and provided that $\Delta\vdash M,N : A$,
then $\Delta\vdash M + N : A$ and $\Delta\vdash \alpha\cdot M : A.$
The rewrite rules are extended as follows.

\smallskip
\noindent
1)~~ A set of algebraic rewrite rules shown in
Table~\ref{tab:rw-alg}. We shall explicitly talk about {\em algebraic
  rewrite rules} when referring to these extended rules. The top row
consists of the associativity and commutativity (AC) rules. We shall
use the term {\em modulo AC} when referring to a rule or property that
is true when not regarding AC rules. For example, modulo AC the term $\star$ is
in normal form and $\alpha\cdot M+(N+\alpha\cdot P)$
reduces to $\alpha\cdot(M+P)+N$.
The reduction
rules from $\Gamma$ will be called {\em non-algebraic}.

\smallskip
\noindent
2)~~
 The relation between the algebraic structure and the other
  constructs: one says that a construct $c(-)$ is
  \define{distributive} when for all $M,N$, $c(M+M) \to c(M) + c(N)$, 
  $c(\alpha\cdot M) \to \alpha\cdot c(M)$ and 
  $c(0) \to 0$.
The following constructs are distributive:
$(-)P$,
$\ifterm{(-)}{P_1}{P_2}$,
$\pi_i(-)$,
$\letunit{(-)}{N}$,
and the pairing construct factors: 
$\pair{M,N} + \pair{M',N'} \to
  \pair{M + M',N + N'}$, 
$\alpha\cdot\pair{M,N} \to
  \pair{\alpha\cdot M, \alpha\cdot N}$ and
$0^{A\times B}
  \to
  \pair{0^A, 0^B}$.

\smallskip
\noindent
3)~~
 Two congruence rules. If $M\to M'$, then 
 $M+N \to M'+N$ and $\alpha\cdot M \to \alpha\cdot M'$.

\begin{remark}\label{rem:red-strat}
  Note that if $(M_1+M_2)(N_1+N_2)$ reduces to
  $M_1(N_1+N_2)+M_2(N_1+N_2)$, it does {\em not} reduce to
  $(M_1+M_2)N_1 + (M_1+M_2)N_2$. If it did, one would get an
  inconsistent calculus~\cite{ad08}. For example, the term $ (\lambda
  x.\pair{x,x})(\ttrue+\ffalse) $ would reduce both to
  $\pair{\ttrue,\ttrue}+\pair{\ffalse,\ffalse}$ and to
  $\pair{\ttrue,\ttrue}+\pair{\ffalse,\ffalse}+
  \pair{\ttrue,\ffalse}+\pair{\ffalse,\ttrue}$.
  We'll come back to this distinction in Section~\ref{sec:call-value-reduction}.
\end{remark}

The algebraic extension preserves the safety properties, the
characterization of values and the strong normalization.
Associativity and commutativity induce a subtlety.

\begin{lemma}
  \label{lem:sn-AC}
  The algebraic fragment of ${\bf PCF}_{\it f}^{\it alg}$ is strongly
  normalizing modulo AC.
\end{lemma}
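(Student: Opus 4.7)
The strategy is a polynomial interpretation. Define $\phi : \text{terms} \to \mathbb{N}_{\geq 1}$ by structural recursion on the algebraic structure of the term, treating any subterm headed by a non-algebraic constructor as an atom of value $1$:
\[
\phi(t) = 1 \text{ for atoms and for } 0, \quad \phi(M+N) = \phi(M) + \phi(N) + 1, \quad \phi(\alpha\cdot M) = 2\phi(M).
\]
Three things must be checked: $(i)$ that $\phi$ is constant on AC-equivalence classes, which is immediate from the commutativity and associativity of $+$ in $\mathbb{N}$ (one directly verifies $\phi((M+N)+P) = \phi(M)+\phi(N)+\phi(P)+2 = \phi(M+(N+P))$); $(ii)$ that every non-AC rule of Table~\ref{tab:rw-alg} strictly decreases $\phi$; and $(iii)$ that $\phi$ is strictly monotone in each argument of $+$ and of the scalar product, so that the congruence rules of item~(3) preserve the strict decrease.

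The content is the case analysis for step $(ii)$. The delicate case is the one rule that enlarges the term, namely the scalar distribution $\alpha\cdot(M+N) \to \alpha M + \alpha N$: here $\phi(\alpha\cdot(M+N)) = 2\phi(M)+2\phi(N)+2$ while $\phi(\alpha M + \alpha N) = 2\phi(M)+2\phi(N)+1$, a drop by exactly $1$. Every other rule is a routine computation, e.g.\ $\phi(\alpha M + \beta M) = 4\phi(M)+1 > 2\phi(M) = \phi((\alpha+\beta)M)$, $\phi(\alpha(\beta M)) = 4\phi(M) > 2\phi(M) = \phi((\alpha\beta)M)$, $\phi(1\cdot M) = 2\phi(M) > \phi(M)$, $\phi(0\cdot M) = 2\phi(M) \geq 2 > 1 = \phi(0)$, $\phi(\alpha\cdot 0) = 2 > 1$, $\phi(0+M) = \phi(M) + 2 > \phi(M)$, and similarly for the merging rules $M+M \to (1{+}1)\cdot M$ and $\alpha M + M \to (\alpha{+}1)M$. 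Since $\phi \in \mathbb{N}$ strictly decreases at every non-AC step, no infinite reduction sequence can exist modulo AC.

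The main (mild) subtlety is tuning the shape of $\phi$: the $+1$ in $\phi(M+N)$ is precisely what makes scalar distribution over a sum strictly decreasing despite its blow-up in syntactic size, while the factor $2$ in $\phi(\alpha\cdot M)$ is what forces the scalar composition rule $\alpha(\beta M) \to (\alpha\beta)M$ to decrease; smaller coefficients would cause one of these two rules to become non-strict. This interpretation handles Table~\ref{tab:rw-alg} together with the congruence closure of item~(3); if one further wants the distributivity rules of item~(2) to be part of the ``algebraic fragment,'' the interpretation can be extended to the remaining constructors (e.g.\ $\phi(MN) = \phi(M)^2(\phi(P)+1)$ handles $(M+N)P \to MP+NP$ and $(\alpha M)P \to \alpha(MP)$), with the only genuinely new ingredient being that $\phi(0^A)$ must then depend on the type $A$ in order to cover the type-directed rule $0^{A\times B} \to \pair{0^A, 0^B}$.
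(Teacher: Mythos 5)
Your proposal is correct and takes essentially the same approach as the paper, which invokes the measure of Arrighi--Dowek defined by $a(0)=0$, $a(x)=1$, $a(M+N)=2+a(M)+a(N)$, $a(\alpha\cdot M)=1+2a(M)$; your $\phi$ is the same kind of AC-invariant interpretation that strictly decreases under every non-AC algebraic rule, differing only in the choice of constants. The paper leaves the case analysis to the cited reference, whereas you carry it out explicitly, so nothing is missing.
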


\begin{proof}
  The proof can be done as in~\cite{ad08}, using the same measure on
  terms that decreases with algebraic rewrites. The measure, written
  $a$, is defined by $a(x)=1$, $a(M+N)=2+a(M)+a(N)$,
  $a(\alpha\cdot M)=1+2a(M)$, $a(0)=0$.
\end{proof}

\begin{lemma}[Safety properties mod AC]
  \label{lem:safety-prop-alg-modAC}
  A well-typed term $M:A$ is a value or, if not,
  reduces to some $N:A$ via a sequence of steps among
  which one is {\em not} algebraic.\qed
\end{lemma}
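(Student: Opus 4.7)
\smallskip
\noindent
\textbf{Proof plan.}
The plan is to proceed by structural induction on $M$, mirroring Lemma~\ref{lem:safety-prop} but handling the new algebraic constructs and distributivity carefully.

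If $M$ is a value, there is nothing to show. Otherwise $M$ is headed by one of the elimination or algebraic constructs: $M_1N$, $\pi_i(M_1)$, $\ifterm{M_1}{N_1}{N_2}$, $\letunit{M_1}{N}$, $M_1+M_2$, or $\alpha\cdot M_1$. I treat the elimination forms uniformly. In each such case I look at the head subterm $M_1$. If $M_1$ is not a value, the induction hypothesis gives a reduction sequence $M_1\to^*M_1'$ containing at least one non-algebraic step, and the corresponding congruence rule lifts this to a reduction of $M$ with a non-algebraic step. If $M_1$ is a value, I appeal to typing to classify it. For example, at an application site $M_1N$, typing forces $M_1:A\to B$, so $M_1$ must be either $\lambda x.M'$ (and then the $\beta$-rule, which is non-algebraic, fires), or one of $0$, $U+V$, or $\alpha\cdot U$ (and then the matching distributivity rule fires, which is likewise non-algebraic since it is not in Table~\ref{tab:rw-alg}). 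The cases of $\pi_i$, $\ifterm{-}{-}{-}$, and $\letunit{-}{-}$ are identical in pattern: the ``canonical'' value triggers a non-algebraic elimination rule, while the algebraic forms $0$, $U+V$, $\alpha\cdot U$ trigger distributivity, which is again non-algebraic.

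For the remaining cases $M=M_1+M_2$ and $M=\alpha\cdot M_1$, the assumption that $M$ is not a value forces at least one $M_i$ to be a non-value; by induction it reduces with a non-algebraic step, and the new congruence rules propagate this to $M$.

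\smallskip
\noindent
\textbf{Main obstacle.}
The only delicate point is in the sum case $M_1+M_2$. The congruence rules given in~(3) only allow reduction under the left argument of $+$ and under $\alpha\cdot(-)$. So if $M_1$ happens to be a value but $M_2$ is not, the direct congruence rule is not available on the right. This is precisely where ``modulo~AC'' enters: by commutativity, $M_1+M_2$ is AC-equivalent to $M_2+M_1$, and from there the left-congruence rule plus the induction hypothesis provide the desired non-algebraic step. Apart from this, the proof is a routine traversal of the term constructors, combining the original PCF progress argument with an inspection of the algebraic value-forms and the distributivity rules of~(2).
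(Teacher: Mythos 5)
The paper states this lemma without proof (it is given with a \qed), so there is no official argument to compare against; your structural induction is the expected one and it is sound, including the key observation that the left-only congruence rule for $+$ is exactly what forces the ``modulo AC'' qualifier. The only implicit assumptions you share with the paper's statement are that $M$ is closed (otherwise a head variable blocks progress, already an issue in Lemma~\ref{lem:safety-prop}) and that the distributivity rules of item~(2) count as non-algebraic --- a reading that is forced anyway, since a term such as $0\,N$ admits no step other than $0\,N\to 0$.
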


\begin{lemma}
  \label{lem:uniq-value-alg}
  Any value of type $\tunit$ has AC-normal form $0$, $\punit$ or
  $\alpha\cdot\punit$, with $\alpha\neq 0,1$.
  \qed
\end{lemma}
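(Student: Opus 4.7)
The plan is to prove the lemma by structural induction on the value $V$, assuming $V$ is closed (as in the analogous Lemma~\ref{lem:uniq-value} for ${\bf PCF}_{\it f}$). The grammar of values is
$V \,{:}{:}{=}\, x \borsmall \lambda x.M \borsmall \pair{M,N} \borsmall \punit \borsmall \ttrue \borsmall \ffalse \borsmall 0 \borsmall U+V \borsmall \alpha\cdot U$, and since we require the type to be $\tunit$ I can immediately discard $\lambda x.M$ (arrow type), $\pair{M,N}$ (product type), and $\ttrue, \ffalse$ (boolean type). Closedness rules out the variable case. This leaves $V \in \{\,\punit, 0, U+W, \alpha\cdot U\,\}$, where $U$ and $W$ are closed values of type $\tunit$ to which the induction hypothesis applies.

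For the base cases $V = \punit$ and $V = 0$, the term is already in AC-normal form and appears in the target set $S = \{0,\, \punit,\, \alpha\cdot\punit : \alpha\neq 0,1\}$. For $V = \alpha \cdot U$, the induction hypothesis gives an AC-normal form $U' \in S$ of $U$, and I apply the scalar rules of Table~\ref{tab:rw-alg} to $\alpha\cdot U'$: if $U' = 0$, then $\alpha\cdot 0 \to 0 \in S$; if $U' = \punit$, then the result is $\alpha\cdot\punit$, which after the possible further rewrites $0\cdot\punit\to 0$ or $1\cdot\punit\to\punit$ again lies in $S$; if $U' = \beta\cdot\punit$ with $\beta\neq 0,1$, then $\alpha\cdot(\beta\cdot\punit)\to(\alpha\beta)\cdot\punit$, and since we are in a field, $\alpha\beta\in\{0,1\}$ can only happen for specific values of $\alpha$, in which case the same scalar follow-up rewrites land back in $S$.

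For $V = U + W$, the induction hypothesis yields AC-normal forms $U', W' \in S$. I perform a finite case analysis on the nine pairs $(U', W')$, applying in each case one of: $0 + M \to M$ (when one side is $0$), $M + M \to (1+1)\cdot M$, $\alpha\cdot M + M \to (\alpha+1)\cdot M$, or $\alpha\cdot M + \beta\cdot M \to (\alpha+\beta)\cdot M$. Each reduction produces a scalar coefficient $\gamma$ in front of $\punit$; I then apply $0\cdot \punit\to 0$ or $1\cdot\punit\to\punit$ as needed so that the final result is in $S$. The only mildly delicate point is the case $\punit + \punit$, which rewrites to $(1+1)\cdot\punit$: in characteristic $2$ this reduces further to $0$, and in other characteristics it stays as $(1+1)\cdot\punit$ with $1+1\notin\{0,1\}$ (since $1\neq 0$), so the outcome lies in $S$. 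The main ``obstacle'' is thus purely combinatorial: carefully tracking when the emergent coefficient is $0$ or $1$ and firing the appropriate scalar rule to keep the normal form within the three canonical shapes.
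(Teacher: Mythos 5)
Your proof is correct. The paper states this lemma with a bare \qed{} and no argument, and your structural induction --- restricting to closed values, discarding the non-$\tunit$ shapes, and then normalizing $\alpha\cdot U$ and $U+W$ via the scalar and addition rules of Table~\ref{tab:rw-alg} followed by the clean-up rewrites $0\cdot\punit\to 0$ and $1\cdot\punit\to\punit$ (including the characteristic-$2$ subtlety for $\punit+\punit$) --- is precisely the routine case analysis the authors leave implicit.
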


\begin{lemma}
  \label{lem:sn-alg}
  Modulo AC, ${\bf PCF}_{\it f}^{\it alg}$ is strongly normalizing.
\end{lemma}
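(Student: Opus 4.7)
The plan is to combine Lemma~\ref{lem:sn} (strong normalization of ${\bf PCF}_{\it f}$) with Lemma~\ref{lem:sn-AC} (strong normalization modulo AC of the purely algebraic fragment), following the reducibility technique of~\cite{ad08} adapted to our call-by-name setting. I would define a forgetful translation $\phi : {\bf PCF}_{\it f}^{\it alg} \to {\bf PCF}_{\it f}$ that is the identity on every non-algebraic constructor (so that it commutes with substitution on the PCF-skeleton), sends $\alpha \cdot M \mapsto \phi(M)$ and $0$ to a canonical closed inhabitant of its type, and collapses $M + N$ into an AC-symmetric combination of $\phi(M)$ and $\phi(N)$, for instance by passing to a canonical representative modulo AC.

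Three properties then need to be checked. First, every non-algebraic reduction $M \to M'$ in ${\bf PCF}_{\it f}^{\it alg}$ lifts to a non-empty reduction $\phi(M) \to^+ \phi(M')$ in ${\bf PCF}_{\it f}$; combined with Lemma~\ref{lem:sn}, this bounds the number of non-algebraic steps in any reduction sequence. Second, the purely algebraic reductions of Table~\ref{tab:rw-alg} strictly decrease the measure $a$ used in the proof of Lemma~\ref{lem:sn-AC}, extended to arbitrary ${\bf PCF}_{\it f}^{\it alg}$-terms by summing $a$ over all algebraic subpositions. Third, the distributivity rules of item~2 terminate on their own, since each application strictly reduces a depth-weighted count of sums nested below non-algebraic constructors. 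A lexicographic measure $(h_{\rm PCF}(\phi(M)),\, a(M))$, whose first coordinate is the maximal length of a ${\bf PCF}_{\it f}$-reduction out of $\phi(M)$ (finite by Lemma~\ref{lem:sn}) and whose second coordinate is the lifted algebraic measure, then strictly decreases modulo AC under every reduction step. Lemma~\ref{lem:safety-prop-alg-modAC} rules out the degenerate case of an infinite reduction sequence containing no non-algebraic step, so this combined measure establishes strong normalization modulo AC.

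The main obstacle is the interaction between substitution and the algebraic structure: a beta step $(\lambda x.P)(M + N) \to P[(M+N)/x]$ duplicates the sum $M+N$ at every occurrence of $x$ in $P$, possibly spawning many new algebraic and distributivity redexes. One must verify that $\phi(P[(M+N)/x]) = \phi(P)[\phi(M+N)/x]$ up to AC, so that the first component of the lexicographic measure behaves correctly under beta, and that $a$ is well-behaved under substitution so that its lifted form decreases under the residual algebraic and distributivity reductions. These are the standard technical lemmas for algebraic lambda-calculi of this kind, and they go through cleanly in our setting because ${\bf PCF}_{\it f}$ is already strongly normalizing, which guarantees that the first coordinate of the lexicographic measure is a valid finite ordinal even after duplications.
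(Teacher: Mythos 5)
There is a genuine gap, and it lies exactly where your proof does the most hand-waving: the target of the forgetful translation. You propose to erase the algebraic structure all the way down to ${\bf PCF}_{\it f}$, collapsing $M+N$ into ``an AC-symmetric combination of $\phi(M)$ and $\phi(N)$.'' But ${\bf PCF}_{\it f}$ has no symmetric binary construct that keeps both components in reducible position: the call-by-name rules of Table~\ref{tab:cbn} never reduce inside a pair component, inside a function argument, or under a $\lambda$, so any encoding of $M+N$ as a single ${\bf PCF}_{\it f}$ term necessarily leaves at least one summand frozen. Your key simulation property --- every non-algebraic step $M\to M'$ lifts to a non-empty reduction $\phi(M)\to^+\phi(M')$ --- therefore fails: a $\beta$-step performed inside one summand of $M+N$ (legitimate in ${\bf PCF}_{\it f}^{\it alg}$ by the congruence rule $M\to M'\Rightarrow M+N\to M'+N$) has no counterpart in $\phi$ of the whole term. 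Nor does the second lexicographic component rescue you, since $a$ is not decreased by non-algebraic steps (a $\beta$-reduction can duplicate sums and increase it). So an infinite sequence that interleaves reductions inside the two summands of a single sum defeats the measure $(h_{\rm PCF}(\phi(M)),a(M))$ as stated. (Minor additional point: it is Lemma~\ref{lem:sn-AC}, not Lemma~\ref{lem:safety-prop-alg-modAC}, that bounds the purely algebraic runs.)

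The paper avoids this by projecting not onto ${\bf PCF}_{\it f}$ but onto an intermediate calculus ${{\bf PCF}_{\it f}}_{\it int}$ that erases only the \emph{scalars} and keeps the sums as a first-class non-deterministic choice; modulo AC this is essentially de Groote's $\lambda_{-{\bf w}{\rm LK}^\to}$~\cite{groote94}, whose strong normalization is established directly (by reducibility) and in which reduction under both summands is meaningful. The overall skeleton of your argument --- bound the algebraic runs by Lemma~\ref{lem:sn-AC} and the non-algebraic steps by a projection onto a known strongly normalizing calculus --- is the right one and matches the paper; the fix is to aim the projection at a sum-carrying calculus rather than at ${\bf PCF}_{\it f}$ itself, since Lemma~\ref{lem:sn} alone cannot account for reductions occurring in parallel branches of a sum.
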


\begin{proof}
  The proof is done by defining an intermediate language ${{\bf PCF}_{\it f}}_{\it
    int}$ where scalars are omitted. Modulo AC, this language is
  essentially the language $\lambda_{-{\bf w}{\rm LK}^\to}$
  of~\cite{groote94}, and is therefore SN. Any term of ${\bf PCF}_{\it f}^{\it
    alg}$ can be re-written as a term of ${{\bf PCF}_{\it f}}_{\it int}$. With
  Lemma~\ref{lem:safety-prop-alg-modAC}, by eliminating some algebraic
  steps a sequence of reductions in ${\bf PCF}_{\it f}^{\it alg}$ can be
  rewritten as a sequence of reductions in ${{\bf PCF}_{\it f}}_{\it int}$. We
  conclude with Lemma~\ref{lem:sn-AC}, saying there is always a finite
  number of these eliminated algebraic rewrites.
\end{proof}

\subsection{Operational equivalence}
\label{sec:lc-cat-alg}

As for ${\bf PCF}_{\it f}$, we define an operational equivalence on terms of the 
language
${\bf PCF}_{\it f}^{\it alg}$. A {\em context $C[-]$} for this language has the same
grammar as for ${\bf PCF}_{\it f}$, augmented with algebraic
structure: 
$C[-]~{:}{:}{=}~
  \alpha\cdot{}C[-] \bor C[-] + N \bor M + C[-] \bor 0$.

For ${\bf PCF}_{\it f}^{\it alg}$, instead of using closed contexts of type
$\bit$, we shall use contexts of type $\tunit$: thanks to
Lemma~\ref{lem:uniq-value-alg}, there are distinct normal forms for
values of type $\tunit$, making this type a good (and slightly simpler)
candidate.

We therefore say that $\Delta\entail M:A$ and $\Delta\entail N:A$ are
operationally equivalent, written $M\opeq N$, if for all closed
contexts $C[-]$ of type $\tunit$ where the hole binds $\Delta$, for
all $b$ normal forms of type $\tunit$, $C[M]\to^* b$ if and only if $
C[N]\to^* b $.

\subsection{Axiomatic equivalence}

The axiomatic equivalence on ${\bf PCF}_{\it f}^{\it alg}$ consists of the one
of ${\bf PCF}_{\it f}$, augmented with the added reduction rules.

\begin{lemma}
  \label{lem:opax-alg}
  If $M:A$ and $M\to N$ then $M\axeq N$.\qed
\end{lemma}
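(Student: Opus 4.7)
The plan is to prove the lemma by straightforward induction on the derivation of $M \to N$, mimicking the proof of Lemma~\ref{lem:opax} but accounting for the extra rewrite rules. The key observation is that, by construction, the axiomatic equivalence $\axeq$ for ${\bf PCF}_{\it f}^{\it alg}$ is defined as the smallest reflexive, symmetric, transitive and fully-congruent relation that contains (i) all the basic reduction rules of Table~\ref{tab:cbn}, (ii) the extensional rules $\lambda x.Mx \axeq M$ and $\pair{\pi_l(M),\pi_r(M)} \axeq M$ inherited from ${\bf PCF}_{\it f}$, and (iii) every reduction rule added in the algebraic extension (the AC and scalar rules of Table~\ref{tab:rw-alg}, together with the distributivity rules for $(-)P$, $\ifterm{(-)}{P_1}{P_2}$, $\pi_i(-)$, $\letunit{(-)}{N}$, and the factoring rules for $\pair{-,-}$). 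So each base-case reduction is, by definition, an instance of a rule included in $\axeq$.

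The inductive cases correspond to the congruence rules of the operational semantics: the rules of Table~\ref{tab:cbn} that propagate reduction under $MN$, $\pi_l(-)$, $\pi_r(-)$, $\ifterm{-}{N_1}{N_2}$ and $\letunit{-}{N}$, together with the two new congruence rules $M+N \to M'+N$ and $\alpha\cdot M \to \alpha\cdot M'$. For each such rule, if the premise $M \to M'$ satisfies $M \axeq M'$ by induction hypothesis, then the conclusion follows directly from the fact that $\axeq$ is fully-congruent: applying the enclosing context $C[-]$ preserves $\axeq$, so e.g.\ $MN \axeq M'N$, $\pi_l(M)\axeq \pi_l(M')$, $M+N \axeq M'+N$, $\alpha\cdot M \axeq \alpha\cdot M'$, and similarly for the other constructs.

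The only thing worth pausing on is to check that no reduction rule of the algebraic fragment has been omitted from the definition of $\axeq$; since the axiomatic equivalence is explicitly constructed to include every rule used in the rewrite system, this is really a definitional matter rather than a mathematical obstacle. In particular, there is no subtle issue with AC modulo rewriting here, because $\axeq$ is already closed under symmetry and transitivity, so the AC equations (which are among the rules of Table~\ref{tab:rw-alg}) give the desired identifications at once. Overall this lemma is essentially bookkeeping, and the proof is completed as soon as one has enumerated the reduction rules and invoked reflexivity, symmetry, transitivity and full congruence of $\axeq$ as needed.
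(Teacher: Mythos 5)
Your proof is correct and matches the paper's (implicit) argument: the paper states this lemma without proof precisely because $\axeq$ is defined to contain every reduction rule of ${\bf PCF}_{\it f}^{\it alg}$ and is closed under full congruence, so the induction on the derivation of $M\to N$ is immediate. Your enumeration of base and congruence cases is exactly the bookkeeping the paper leaves to the reader.
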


\subsection{Finite vector spaces as a model}
\label{sec:model-alglc}

The category $\FinVec_!$ is a denotational model of the language
${\bf PCF}_{\it f}^{\it alg}$. Types are interpreted as for the language
${\bf PCF}_{\it f}$ in Section~\ref{sec:lc-finvec-model}. Typing judgments are
also interpreted in the same way, with the following additional
rules. First,
$\fvdenot{\Delta\entail 0:A} = 0$. Then $\fvdenot{\Delta\entail \alpha\cdot M:A} = \alpha\cdot
  \fvdenot{\Delta\entail M:A}$. Finally, we have
$\fvdenot{\Delta\entail M+N:A} = \fvdenot{\Delta\entail M:A} + \fvdenot{\Delta\entail N:A}$.

\begin{remark}
With the extended term constructs, the language ${\bf PCF}_{\it f}^{\it alg}$
does not share the drawbacks of ${\bf PCF}_{\it f}$ emphasized in
Remark~\ref{rem:notfullyabst}. In particular, the two valid typing
judgments $ x:\bit\entail\ttrue:\bit $ and $
x:\bit\entail\ifterm{x}{\ttrue}{\ttrue}:\bit $ are now operationally
distinct. For example, if one chooses the context $C[-]=(\lambda x.[-])0$, the
term $C[\ttrue]$ reduces to $\ttrue$ whereas the term
$C[\ifterm{x}{\ttrue}{\ttrue}]$ reduces to $0$.
\end{remark}

\begin{lemma}
  \label{lem:axdenot-alg}
  If $M\axeq{}N:A$ in ${\bf PCF}_{\it f}^{\it alg}$ then $\fvdenot{M} =
  \fvdenot{N}$.\qed
\end{lemma}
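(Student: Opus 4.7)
The plan is to proceed by induction on the derivation of $M\axeq N$, reducing to a case check that each generating relation preserves $\fvdenot{-}$. Reflexivity, symmetry, and transitivity are inherited from equality of morphisms in $\FinVec$. Full-congruence is handled by compositionality: the clauses of Table~\ref{tab:denot-lc-alg} together with $\fvdenot{M+N}=\fvdenot{M}+\fvdenot{N}$, $\fvdenot{\alpha\cdot M}=\alpha\cdot\fvdenot{M}$, and $\fvdenot{0}=0$ build $\fvdenot{C[M]}$ uniformly from $\fvdenot{M}$, so $\fvdenot{M}=\fvdenot{M'}$ implies $\fvdenot{C[M]}=\fvdenot{C[M']}$ for every context $C[-]$ of the extended grammar. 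This leaves the rewriting axioms to check one at a time.

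The non-algebraic rules of Table~\ref{tab:cbn} together with $\lambda x.Mx\axeq M$ and $\pair{\pi_l(M),\pi_r(M)}\axeq M$ hold in any cartesian closed category, so the argument used implicitly for Lemma~\ref{lem:axdenot} applies verbatim once one observes that $\FinVec_!$ is ccc (which was recorded in Section~\ref{sec:finvec}): $\beta$-reduction is validated by the adjunction $(\_)\loli\_\dashv\_\tensor(\_)$ through the counit/comultiplication of the $!$ comonad, projections by the universal property of the product, $\ifterm{\ttrue}{N}{P}$ and $\ifterm{\ffalse}{N}{P}$ by substitution into the defining clause of $\fvdenot{\ifterm{-}{-}{-}}$, and $\letunit{\star}{M}$ because $\fvdenot{\star}=\star$ gives scalar $\alpha=1$. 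The algebraic rules of Table~\ref{tab:rw-alg} are exactly the axioms of a $K$-vector space; each equates two algebraic expressions whose denotations agree in the $K$-module structure of the hom-sets of $\FinVec$.

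The distributivity rules of point~(2) all reduce to linearity of the interpretation in the distinguished argument. For $(-)P$ the map $f\mapsto f(\fvdenot{P}(d))$ on $\FinVec(!\fvdenot{A},\fvdenot{B})$ is $K$-linear because hom-set addition and scalar action are pointwise. For $\ifterm{(-)}{P_1}{P_2}$ the map $\alpha\cdot\ttrue+\beta\cdot\ffalse\mapsto\alpha\cdot\fvdenot{P_1}(d)+\beta\cdot\fvdenot{P_2}(d)$ is linear by construction. For $\pi_i(-)$ and $\letunit{(-)}{N}$ the linearity is immediate. The pair-factorisation rules are checked component-wise in $\fvdenot{A}\times\fvdenot{B}$, and the congruence rules $M+N\to M'+N$ and $\alpha\cdot M\to\alpha\cdot M'$ are instances of compositionality already handled.

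I expect the main obstacle to be purely bookkeeping rather than conceptual: the rules are numerous and several of them sit at the interface between the linear structure (where $\FinVec$ lives) and the coKleisli structure (where terms are interpreted), so one has to be careful when verifying $\beta$-reduction and the pair rules to keep track of the Seely isomorphism $!(A\times B)\cong{!A}\tensor{!B}$ that was invoked after Eq.~\eqref{eq:map}. Once that identification is fixed consistently, every case reduces to a short algebraic calculation.
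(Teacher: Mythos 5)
Your proposal is correct and fills in exactly the routine verification that the paper leaves implicit (the lemma is stated with no proof): induction on the derivation of $\axeq$, with compositionality handling congruence and a case-by-case check that each generating rule --- the CCC equations, the module axioms of Table~\ref{tab:rw-alg}, and the distributivity/factorisation rules via linearity of the interpretation in the distinguished argument --- is validated in $\FinVec_!$. Your reading of the pairing clause as landing in the product $\fvdenot{A}\times\fvdenot{B}$ (so that the pair-factorisation rules hold component-wise) is the intended one, and the rest of the case analysis is sound.
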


\begin{theorem}
  \label{th:sound-alg}
  Let $\Delta\entail M,N:A$ be two valid typing judgments in
  ${\bf PCF}_{\it f}^{\it alg}$. If $\fvdenot{M}=\fvdenot{N}$ then we also have
  $M\opeq N$.
\end{theorem}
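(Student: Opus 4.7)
I would mirror the structure of the proofs of Theorem~\ref{th:sound} and Theorem~\ref{th:sound-st-alg} and argue by contrapositive, but with $\tunit$ playing the role that $\bit$ played there. Suppose $M \not\opeq N$. By definition of $\opeq$ for ${\bf PCF}_{\it f}^{\it alg}$, there is a closed context $C[-]$ of type $\tunit$ binding $\Delta = \{x_i:A_i\}_i$, and a normal form $b$ of type $\tunit$, such that (without loss of generality) $C[M] \to^* b$ while $C[N] \not\to^* b$. By the strong normalization modulo AC (Lemma~\ref{lem:sn-alg}), $C[N]$ reduces to \emph{some} normal form $b'$ of type $\tunit$, and by Lemma~\ref{lem:uniq-value-alg} both $b$ and $b'$ lie (modulo AC) in the set $\{0, \punit, \alpha\cdot\punit \mid \alpha \neq 0,1\}$. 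Since $C[N]$ does not reduce to $b$, the normal forms $b$ and $b'$ must be distinct modulo AC.

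Now I would push these reductions through the axiomatic equivalence in the standard way. By Lemma~\ref{lem:opax-alg} and the fact that $\axeq$ is closed under composition, $C[M] \axeq b$ and $C[N] \axeq b'$. Wrapping in a binder, $(\lambda z.\, C[z\,x_1 \ldots x_n])(\lambda x_1 \ldots x_n.\, M) \axeq b$ and the analogous identity with $N$ yields $b'$. Applying Lemma~\ref{lem:axdenot-alg}, the left-hand sides have denotations $\fvdenot{b}$ and $\fvdenot{b'}$ respectively. Here it is essential that the context-wrapping operation is functorial on denotations; this is automatic because $\fvdenot{-}$ is defined compositionally on the syntax.

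From the assumption $\fvdenot{M} = \fvdenot{N}$, one gets $\fvdenot{\lambda x_1\ldots x_n.\,M} = \fvdenot{\lambda x_1\ldots x_n.\,N}$ (since both are constructed by the same compositional rule applied to equal maps), and then the two wrapped applications have equal denotations as well. Therefore $\fvdenot{b} = \fvdenot{b'}$ in $\fvdenot{\tunit}$. But inspecting the interpretation rules in Table~\ref{tab:denot-lc-alg} together with the extension clauses for $0$, $(+)$, and $(\alpha\cdot{})$, the three kinds of $\tunit$-normal forms denote respectively $0$, $\star$ and $\alpha\cdot\star$ inside $\fvdenot{\tunit} \cong K$, all pairwise distinct. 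This gives $b = b'$ modulo AC, contradicting what we obtained, and completes the proof.

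The main subtlety to nail down is step where I claim that distinct $\tunit$-normal forms give distinct denotations: one must verify that the AC-rewrites used to put $b,b'$ into canonical shape $\alpha\cdot\punit$ do not change $\fvdenot{-}$ (which follows from the fact that addition and scalar multiplication in the semantics are honest vector-space operations), and that $\fvdenot{\tunit}$ really is free on $\{\punit\}$ as a $K$-module so that $\alpha\cdot\star = \alpha'\cdot\star$ implies $\alpha = \alpha'$. Everything else is a straight adaptation of the ${\bf PCF}_{\it f}$ soundness argument, using the algebraic variants (\ref{lem:sn-alg}, \ref{lem:uniq-value-alg}, \ref{lem:opax-alg}, \ref{lem:axdenot-alg}) of the lemmas used there.
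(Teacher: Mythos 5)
Your proof is correct and follows essentially the same route as the paper's: contrapositive via a distinguishing context, wrapping $C[M]$ and $C[N]$ as applications, reducing to distinct $\tunit$-normal forms via Lemmas~\ref{lem:sn-alg}, \ref{lem:uniq-value-alg} and~\ref{lem:opax-alg}, and concluding with Lemma~\ref{lem:axdenot-alg}. You merely spell out a step the paper leaves implicit, namely that the normal forms $0$, $\punit$ and $\alpha\cdot\punit$ have pairwise distinct denotations in $\fvdenot{\tunit}\cong K$.
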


\begin{proof}
  The proof is similar to the proof of Theorem~\ref{th:sound}: Assume
  $M\not\opeq N$. Then there exists a context $C[-]$ that distinguishes
  them. The call-by-name reduction preserves the type from
  Lemma~\ref{lem:safety-prop-alg-modAC}, and $C[M]$ and $C[N]$ can be rewritten
  as the terms
  $(\lambda y.C[y\,x_1\ldots x_n])\lambda x_1\ldots x_n.M$ and $(\lambda
  y.C[y\,x_1\ldots x_n])\lambda x_1\ldots x_n.N$, and these are axiomatically
  equivalent to distinct normal forms, from Lemmas~\ref{lem:sn-alg}
  and~\ref{lem:opax-alg}. We conclude from Lemmas~\ref{lem:opax-alg}
  and~\ref{lem:axdenot-alg} that the
  denotations of $M$ and $N$ are distinct.
\end{proof}

\subsection{Two auxiliary constructs}
\label{sec:aux-constructs}

Full completeness requires some machinery.  It is obtained by
showing that for every type $A$, for every vector $v$ in
$\fvdenot{A}$, there are two terms $M^A_v : A$ and $\delta^A_v : A \to
\tunit$ such that $\fvdenot{M^A_v} = v$ and $\fvdenot{\delta^A_v}$ sends
$b_{v}$ to $\punit$ and all other $b_{-}$'s to $0$.

We first define a family of terms ${\tt exp}^i:\tunit\to\tunit$
inductively on $i$: ${\tt exp}^0 = \lambda x.\star$ and 
${\tt exp}^{i+1} = \lambda x.\letunit{x}{{\tt exp}^i(x)}$.
One can show that $\fvdenot{{\tt
  exp}^{i}(\alpha\cdot\star)} = \alpha^i\cdot\star$.
Then assume that $o$ is the
order of the field. Let  ${\tt iszero}:\tunit\to\tunit$ be the term 
${\tt exp}^o$. The denotation of ${\tt
  iszero}$ is such that $\fvdenot{{\tt izero}(\alpha\cdot\star)}=0$ if
$\alpha=0$ and $\star$ otherwise.

The mutually recursive definitions of $\delta^A_v$ and $M^A_v$ read as
follows.

\smallskip
\noindent
{\em At type $A=\tunit$.}~~
The term $M^{\tunit}_{\alpha\cdot\star}$ is simply $\alpha\cdot\star$.
The term $\delta^{\tunit}_{\alpha\cdot\star}$ is 
$
\lambda x.{\tt iszero}(x-\alpha\cdot\star)
$.

\smallskip
\noindent
{\em At type $A=\bit$.}~~
As for the type $\tunit$, the term $M^{\bit}_{\alpha\cdot\ttrue+\beta\cdot\ffalse}$
is simply $\alpha\cdot\ttrue+\beta\cdot\ffalse$. The term
$\delta^{\bit}_{\alpha\cdot\ttrue+\beta\cdot\ffalse}$ is reusing the
definition of $\delta^\tunit$: it is the term
$
\lambda
x.\letunit{\delta^{\tunit}_{\alpha\cdot\star}(\ifterm{x}{\star}{0})}{}$
$
\delta^{\tunit}_{\beta\cdot\star}(\ifterm{x}{0}{\star})
$.

\smallskip
\noindent
{\em At type $A=B\times C$}.~~
If
$v\in\fvdenot{A}={\fvdenot{B}}\times{\fvdenot{C}}$, then $v =
\pair{u,w}$, with $u\in\fvdenot{B}$ and $w\in\fvdenot{C}$. 
By induction, one can construct $M^B_{u}$ and $M^C_{w}$: the term
$M^{B\times C}_v$ is $\pair{M^B_{u},M^C_{w}}$. Similarly, one can construct the terms
$\delta^B_u$ and $\delta^C_w$: the term $\delta^{B\to C}_v$ is 
$
\lambda x.\letunit{\delta^B_u\,\pi_l(x)}{\delta^C_w\,\pi_r(x)}$.

\smallskip
\noindent
{\em At type $A=B\to C$.}~~
Consider
$f\in\fvdenot{A}={!\fvdenot{B}}\loli\fvdenot{C}$. The domain of $f$
is finite-dimensional: let $\{b_{u_i}\}_{i=1\ldots n}$ be its basis, and let $w_i$
be the value $f(b_{u_i})$. Then, using the terms $\delta^B_{u_i}$ and
$M^C_{w_i}$, one can define $M^{B\to C}_v$ as the term
$
\sum_i\lambda x.\letunit{\delta^{B}_{u_i}\,x}{M^{C}_{w_i}}
$.
Similarly, one can construct $\delta^C_{w_i}$ and $M^B_{u_i}$, and from
the construction in the previous paragraph we can also generate
$\delta^{C^{\times n}}_{\pair{w_1,\ldots w_n}}:C^{\times n}\to\bit$. The term
$\delta^{B\to C}_v$ is then defined as
$
\lambda f.\delta^{C^{\times n}}_{\pair{w_1,\ldots w_n}}\,\pair{f\,M^B_{u_1},\ldots,f\,M^B_{u_1}}.
$

\subsection{Full completeness}

We are now ready to state completeness, whose proof is simply by
observing that any $v\in\fvdenot{A}$ can be realized by the term $M^A_v:A$.

\begin{theorem}[Full completeness]
  \label{th:comp-alg}
  For any type $A$, any vector $v$ of $\fvdenot{A}$ in $\FinVec_!$ is
  representable in the language ${\bf PCF}_{\it f}^{\it alg}$.\qed
\end{theorem}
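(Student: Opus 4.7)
The plan is to prove by simultaneous induction on the type $A$ two properties: (i) $\fvdenot{M^A_v} = v$ for every $v \in \fvdenot{A}$, and (ii) $\fvdenot{\delta^A_v}$ sends the base element $b_v$ to $\star$ and every other base element $b_u$ to $0$. The theorem is property (i); property (ii) is needed as an auxiliary invariant at the function type. Before entering the induction, I would verify the claim $\fvdenot{{\tt exp}^i(\alpha\cdot\star)} = \alpha^i\cdot\star$ for $i \geq 1$, which is a straightforward induction on $i$ using the denotational rule for $\letunit{-}{-}$ to peel off one scalar factor per step. Then, since the multiplicative group of a finite field of order $o$ satisfies $\alpha^o = 1$ for every $\alpha \neq 0$ (Fermat's little theorem), the term ${\tt iszero}={\tt exp}^o$ indeed sends $b_0$ to $0$ and $b_{\alpha\cdot\star}$ to $\star$ for every non-zero $\alpha$.

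The base cases $A = \tunit$ and $A = \bit$ are then direct. For $\tunit$, property (i) is trivial and (ii) follows from the behaviour of ${\tt iszero}$ applied to the difference $x-\alpha\cdot\star$. For $\bit$, the denotational rule for the conditional maps $b_{\gamma\cdot\ttrue+\delta\cdot\ffalse}$ through $\ifterm{x}{\star}{0}$ and $\ifterm{x}{0}{\star}$ to $\gamma\cdot\star$ and $\delta\cdot\star$ respectively; composing with $\delta^{\tunit}_{\alpha\cdot\star}$ and $\delta^{\tunit}_{\beta\cdot\star}$ via $\letunit{-}{-}$ detects coincidence of both coefficients, relying on the denotational identity $\letunit{0}{N}=0$. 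The product case $A = B\times C$ follows immediately from the inductive hypothesis via the pairing and projection rules, together with the same zeroing property of $\letunit{-}{-}$.

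The function case $A = B \to C$ is the main obstacle. Because $!\fvdenot{B}$ is freely generated by a finite set $\{b_{u_i}\}_{i=1,\ldots,n}$, every $f \in {!\fvdenot{B}}\loli\fvdenot{C}$ is determined by the finitely many values $w_i = f(b_{u_i})$. Evaluating $\fvdenot{M^{B\to C}_f}$ at an arbitrary basis element $b_{u_j}$, the rule for $\letunit{-}{-}$ combined with the inductive hypothesis (ii) on $\delta^B_{u_i}$ annihilates every summand except the $i=j$ one, which contributes $\fvdenot{M^C_{w_j}}=w_j$ by the inductive hypothesis (i); linearity then extends this agreement to all of $!\fvdenot{B}$, yielding $\fvdenot{M^{B\to C}_f}=f$. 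For (ii), $\delta^{B\to C}_f$ on input $b_g$ forms the tuple $\pair{g(b_{u_1}),\ldots,g(b_{u_n})}$ using the $M^B_{u_i}$ and then invokes $\delta^{C^{\times n}}_{\pair{w_1,\ldots,w_n}}$ (whose behaviour at the product type was already established), which returns $\star$ exactly when $g$ agrees with $f$ on every $b_{u_i}$, i.e.\ exactly when $g=f$. The crux of the argument is thus the interplay of three ingredients: the finite-dimensionality of $!\fvdenot{B}$ (giving a finite index for the sum defining $M^{B\to C}_f$), the scalar-extracting behaviour of $\letunit{-}{-}$ (allowing $\delta$-terms to be used as boolean-like tests that multiplicatively gate a term), and Fermat's little theorem encoded in ${\tt iszero}$.
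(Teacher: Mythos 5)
Your proof takes exactly the route the paper does: the theorem is reduced to the construction of the mutually recursive families $M^A_v$ and $\delta^A_v$, and your two-part induction hypothesis is precisely the invariant the paper states for them, so the verification you give at each type constructor is the intended one. One concrete point deserves a flag (it is a slip you share with the paper's own text): by your own computation, ${\tt exp}^o$ sends $b_{\alpha\cdot\star}$ to $\alpha^o\cdot\star$, i.e.\ to $\star$ for $\alpha\neq 0$ and to $0$ for $\alpha=0$ --- it is an indicator of \emph{non}-zero. Consequently $\lambda x.\,{\tt iszero}(x-\alpha\cdot\star)$ sends $b_{\alpha\cdot\star}$ to $0$ and every other $b_{\gamma\cdot\star}$ to $\star$, which is the complement of the invariant (ii) you need, so the step ``(ii) follows from the behaviour of ${\tt iszero}$ applied to the difference'' does not go through as written. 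The fix is immediate: take $\delta^{\tunit}_{\alpha\cdot\star} = \lambda x.\,\star - {\tt exp}^o(x - \alpha\cdot\star)$, whose denotation at $b_{\gamma\cdot\star}$ is $(1-(\gamma-\alpha)^o)\cdot\star$, i.e.\ $\star$ exactly when $\gamma=\alpha$ and $0$ otherwise. With that polarity corrected, the rest of your argument --- the scalar-gating role of $\letunit{-}{-}$, the basis-by-basis check at arrow types using finite-dimensionality of $!\fvdenot{B}$, and the use of $\delta^{C^{\times n}}$ to test the graph of $g$ against that of $f$ --- is exactly the paper's.
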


\begin{theorem}
  \label{th:eq-alg}
  For all $M$ and $N$, $M\opeq N$ if and only if
  $\fvdenot{M}=\fvdenot{N}$.\qed
\end{theorem}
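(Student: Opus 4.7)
The forward direction is Theorem~\ref{th:sound-alg}, so the work lies in the right-to-left implication, which I would prove by contrapositive in the same spirit as Theorem~\ref{th:eq}, but using the full completeness of the algebraic model (Theorem~\ref{th:comp-alg}) and the distinguishing $\delta$-terms built in Section~\ref{sec:aux-constructs}.

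Fix $\Delta=x_1{:}A_1,\ldots,x_n{:}A_n\entail M,N:A$ with $\fvdenot{M}\neq\fvdenot{N}$. Viewed in $\FinVec_!$, these are distinct morphisms $!\fvdenot{A_1}\otimes\cdots\otimes{!}\fvdenot{A_n}\to\fvdenot{A}$, so they must disagree on some basis element, i.e.\ there exist vectors $v_i\in\fvdenot{A_i}$ with
\[
w_M\;\defas\;\fvdenot{M}(b_{v_1}\tensor\cdots\tensor b_{v_n})\;\neq\;\fvdenot{N}(b_{v_1}\tensor\cdots\tensor b_{v_n})\;\defas\;w_N
\]
in $\fvdenot{A}$. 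Using the mutually recursive construction of Section~\ref{sec:aux-constructs}, pick terms $M^{A_i}_{v_i}:A_i$ with $\fvdenot{M^{A_i}_{v_i}}=v_i$ and a term $\delta^A_{w_M}:A\to\tunit$ whose denotation sends $b_{w_M}$ to $\star$ and every other basis vector $b_w$ (with $w\neq w_M$) to $0$.

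I then take the closed context of type $\tunit$
\[
C[-]\;\defas\;\delta^A_{w_M}\bigl((\lambda x_1\ldots x_n.[-])\,M^{A_1}_{v_1}\cdots M^{A_n}_{v_n}\bigr),
\]
where the hole binds $\Delta$. Axiomatic equivalence (Lemma~\ref{lem:opax-alg}) reduces $(\lambda x_1\ldots x_n.M)\,M^{A_1}_{v_1}\cdots M^{A_n}_{v_n}$ to $M[M^{A_1}_{v_1}/x_1,\ldots,M^{A_n}_{v_n}/x_n]$, and by Lemma~\ref{lem:axdenot-alg} its denotation coincides with $\fvdenot{M}(b_{v_1}\tensor\cdots\tensor b_{v_n})=w_M$; composing with $\delta^A_{w_M}$ in $\FinVec_!$ then yields $\fvdenot{C[M]}=\star$. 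The same reasoning for $N$ gives $\fvdenot{C[N]}=\fvdenot{\delta^A_{w_M}}(b_{w_N})=0$ because $w_N\neq w_M$.

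By Lemma~\ref{lem:sn-alg}, both $C[M]$ and $C[N]$ reduce to normal forms of type $\tunit$; Lemma~\ref{lem:axdenot-alg} says these normal forms have denotations $\star$ and $0$ respectively, and Lemma~\ref{lem:uniq-value-alg} pins them down to distinct closed normal forms (up to AC). Hence $C[M]$ and $C[N]$ reduce to different normal forms of type $\tunit$, witnessing $M\not\opeq N$. The only real obstacle is the routine verification that $\fvdenot{(\lambda x_1\ldots x_n.M)\,M^{A_1}_{v_1}\cdots M^{A_n}_{v_n}}$ really is $\fvdenot{M}(b_{v_1}\tensor\cdots\tensor b_{v_n})$ once the coKleisli composition is unfolded; everything else is bookkeeping on top of Theorem~\ref{th:comp-alg} and the soundness lemmas.
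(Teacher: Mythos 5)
Your proposal is correct and follows essentially the route the paper intends (the paper leaves the proof implicit with a \qed): soundness via Theorem~\ref{th:sound-alg} for the forward direction, and for the converse the contrapositive argument of Theorem~\ref{th:eq} transported to $\FinVec_!$, using the terms $M^{A_i}_{v_i}$ and $\delta^A_{w_M}$ of Section~\ref{sec:aux-constructs} to build a distinguishing context of type $\tunit$, with Lemmas~\ref{lem:sn-alg}, \ref{lem:opax-alg}, \ref{lem:axdenot-alg} and~\ref{lem:uniq-value-alg} pinning down the distinct normal forms. The only point worth making explicit is the one you already flag: that two distinct linear maps out of $!\fvdenot{A_1}\otimes\cdots\otimes!\fvdenot{A_n}$ must differ on a basis vector $b_{v_1}\tensor\cdots\tensor b_{v_n}$, and that iterated application in the model indeed feeds $b_{\fvdenot{N}(d)}$ into the function, which is exactly what Table~\ref{tab:denot-lc-alg} provides.
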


A corollary of the full completeness is that the semantics $\FinVec$ is also adequate
and fully abstract with respect to ${\bf PCF}_{\it f}^{\it alg}$.

\section{Discussion}
\label{sec:discussion}

\subsection{Simulating the vectorial structure.}
\label{sec:fact-alg}

As we already saw, there is a full embedding of category $E:
{\FinVec}_{!} \hookrightarrow {\FinSet}$. This embedding can be
understood as ``mostly'' saying that the vectorial structure ``does not
count'' in ${\FinVec}_!$, as one can simulate it with finite sets.
Because of Theorems~\ref{th:eq} and~\ref{th:eq-alg}, on the syntactic
side algebraic terms can also be simulated by the regular ${\bf
  PCF}_{\it f}$.

\begin{table}[b]
  \caption{Relation between ${\bf PCF}_{\it f}$ and ${\bf PCF}_{\it f}^{\it alg}$}
  \label{tab:l-l-alg}
  \scalebox{.93}{\begin{minipage}{5.1in}
  \[
  \begin{array}{@{}c@{}}
  \phi^{\rm vec}_{\tunit} = \letunit{\delta_{0}x}{\ttrue} +
  \letunit{\delta_{\star}x}{\ffalse}
  \qquad
  \bar\phi^{\rm vec}_{\tunit} = \ifterm{x}{0}{\star}
  \\
  \begin{array}{l@{}l@{}l}
    \phi^{\rm vec}_{\bit} {}={}& 
    \phantom{{}+{}}\letunit{\delta_{0}x}{\pair{\ttrue,\ttrue}}
    &{}+\letunit{\delta_{\ttrue}x}{\pair{\ttrue,\ffalse}}
    \\
    &
    {}+\letunit{\delta_{\ffalse}x}{\pair{\ffalse,\ttrue}}
    &{}+\letunit{\delta_{\ttrue+\ffalse}x}{\pair{\ffalse,\ffalse}}
  \end{array}
  \\
  \begin{array}{l@{}l}
    \bar\phi^{\rm vec}_{\bit} {}={}&
    {\tt if}~(\pi_l{}x)\,{\tt then}\,(\ifterm{(\pi_r{}x)}{0}{\ttrue})
    {\tt else}~(\ifterm{(\pi_r{}x)}{\ffalse}{\ttrue+\ffalse})
  \end{array}
  \\
  \begin{array}{l@{}l}
    \phi^{\rm vec}_{B\times C} {}={}
    &
    \pair{x;\pi_l;\phi^{\rm vec}_{B}, x;\pi_r;\phi^{\rm vec}_{C}},
    \\
    \bar\phi^{\rm vec}_{B\times C} {}={}
    &
    \pair{x;\pi_l;\bar\phi^{\rm vec}_{B}, x;\pi_r;\bar\phi^{\rm vec}_{C}},
  \end{array}
  \qquad
  \begin{array}{@{}l@{}l}
    \phi^{\rm vec}_{B\to C} {}={}
    &
    \lambda y.x(y;\bar\phi^{\rm vec}_{B});\phi^{\rm vec}_{C},
    \\
    \bar\phi^{\rm vec}_{B\to C} {}={}
    &
    \lambda y.x(y;\phi^{\rm vec}_{B});\bar\phi^{\rm vec}_{C}.
  \end{array}
  \end{array}
  \]
  \end{minipage}}
\end{table}

In this section, for simplicity, we assume that the field
is $\mathbb{F}_2$. In general, it can be any finite size provided
that the regular lambda-calculus ${\bf PCF}_{\it f}$ is augmented with $q$-bits,
i.e. base types with $q$ elements (where $q$ is the characteristic of
the field).

\begin{definition}
  The \define{vec-to-set} encoding of a type $A$, written
  $\vectoset{A}$, is defined inductively as follows:
  $\vectoset(\tunit)=\bit$,
  $\vectoset(\bit)=\bit\times\bit$, $\vectoset(A\times
  B)=\vectoset(A)\times\vectoset(B)$, and $\vectoset(A\to
  B)=\vectoset(A)\to\vectoset(B)$.
\end{definition}

\begin{theorem}
  \label{th:l-l-alg}
  There are two typing judgments $x:A \vdash \phi^{\rm vec}_A :
  \vectoset(A)$ and $x:\vectoset(A)\vdash
  {\bar\phi}^{\rm vec}_A : A$, inverse of each other, in ${\bf PCF}_{\it f}^{\it
    alg}$ such that any typing judgment $x:A\vdash M:B$ can be
  factored into
  $
  A 
  \xrightarrow{\phi^{\rm vec}_A}
  \vectoset(A)
  \xrightarrow{\tilde{M}}
  \vectoset(B)
  \xrightarrow{\bar\phi^{\rm vec}_B}
  B,
  $
  where $\tilde{M}$ is a regular lambda-term of ${\bf PCF}_{\it f}$.
\end{theorem}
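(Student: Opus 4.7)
The plan is two-stage: first, verify the inverse relationship between $\phi^{\rm vec}_A$ and $\bar\phi^{\rm vec}_A$; second, construct $\tilde{M}$ and check the factorization.

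\textit{Stage 1 (inverse property).} Show by structural induction on $A$ that $\fvdenot{\phi^{\rm vec}_A;\bar\phi^{\rm vec}_A} = \fvdenot{\lambda x.x}$ as morphisms $A\to A$ in $\FinVec_!$. At $\tunit$, one computes directly that $\phi^{\rm vec}_\tunit$ sends $b_0\mapsto\ttrue$ and $b_\star\mapsto\ffalse$ (in each case one of the two $\letunit{\cdot}{\cdot}$-summands vanishes since $\fvdenot{\letunit{0}{-}}=0$), and that $\bar\phi^{\rm vec}_\tunit$ sends $b_\ttrue\mapsto 0$ and $b_\ffalse\mapsto\star$; the coKleisli composition then recovers the identity. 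The case $\bit$ is analogous via the four $\delta$-summands of $\phi^{\rm vec}_\bit$. The product case reduces componentwise. The arrow case unfolds, up to $\eta$, to $\lambda z.\bar\phi^{\rm vec}_C(\phi^{\rm vec}_C(x(\bar\phi^{\rm vec}_B(\phi^{\rm vec}_B(z)))))$, which by two applications of the inductive hypothesis collapses to $\lambda z.x(z)\axeq x$. A crucial byproduct: since $\vectoset(A)$ has only $\bit$ as its base type, $\fsdenot{\vectoset(A)}$ canonically embeds as a distinguished set of \emph{basic} vectors of $\fvdenot{\vectoset(A)}$, and $\phi^{\rm vec}_A$ actually maps every $b_v$ to one of these, yielding a bijection $\iota_A:|\fvdenot{A}|\to\fsdenot{\vectoset(A)}$ whose inverse is the restriction of $\bar\phi^{\rm vec}_A$ to basic inputs.

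\textit{Stage 2 (factorization).} Given $x:A\vdash M:B$, define the set function
\[
  g:\fsdenot{\vectoset(A)}\longrightarrow\fsdenot{\vectoset(B)},\qquad
  g(s)=\iota_B\bigl(\fvdenot{M}(b_{\iota_A^{-1}(s)})\bigr).
\]
By Theorem~\ref{th:compl} there is a pure ${\bf PCF}_{\it f}$ term $\tilde M$ with $\fsdenot{\tilde M}=g$. By Theorem~\ref{th:eq-alg} it suffices to check $\fvdenot{M}=\fvdenot{\phi^{\rm vec}_A;\tilde{M};\bar\phi^{\rm vec}_B}$. Evaluated at a basis vector $b_v$, the right-hand side computes successively $\phi^{\rm vec}_A(b_v)=b_{\iota_A(v)}$ (basic, by Stage~1), then $\fvdenot{\tilde M}(b_{\iota_A(v)})=b_{g(\iota_A(v))}$ (basic, by the key lemma below), then $\bar\phi^{\rm vec}_B(b_{g(\iota_A(v))})=\iota_B^{-1}(g(\iota_A(v)))=\fvdenot{M}(b_v)$ by definition of $g$ and Stage~1, as required.

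\textit{The main obstacle} is the middle step: one must show that for any pure ${\bf PCF}_{\it f}$ term $P$ between $\vectoset$-types, $\fvdenot{P}$ sends basic vectors to basic vectors and agrees there with $\fsdenot{P}$. This is handled by a separate short induction on the syntax of $P$: variables, constants, pairs, projections, applications and $\lambda$-abstractions preserve basicness transparently; the only cases involving scalar arithmetic are $\ifterm{\cdot}{\cdot}{\cdot}$ and $\letunit{\cdot}{\cdot}$, where basicness of the scrutinee forces its coefficients in Table~\ref{tab:denot-lc-alg} to lie in $\{0,1\}$, selecting exactly one branch without introducing any genuine linear combination. This compatibility between $\fsdenot{\cdot}$ and $\fvdenot{\cdot}$ on pure ${\bf PCF}_{\it f}$ is precisely what allows $\FinSet$ full completeness to yield, after conjugation by $\phi^{\rm vec}$ and $\bar\phi^{\rm vec}$, a representation of the original algebraic morphism~$M$.
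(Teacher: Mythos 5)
The paper's own proof of this theorem consists of little more than the table of definitions, so you are supplying the verification it omits; your two-stage architecture --- check the inverse property by induction on $A$, then obtain $\tilde M$ from the $\FinSet$ full-completeness Theorem~\ref{th:compl} and verify the factorization denotationally via Theorem~\ref{th:eq-alg} --- is the natural elaboration and matches the intended reading of Table~\ref{tab:l-l-alg}. The concrete computations at $\tunit$ and $\bit$ are correct, and you have correctly isolated where the real content lies: the compatibility of $\fvdenot{\cdot}$ and $\fsdenot{\cdot}$ on pure ${\bf PCF}_{\it f}$ terms.

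That compatibility lemma is, however, misformulated, and the "short induction on the syntax of $P$" would stall as stated. At $\bit$ and at products, "basic vector" is indeed a subset of $\fvdenot{\vectoset(C)}$ in canonical bijection with $\fsdenot{\vectoset(C)}$. At arrow types it is not: an element of $\fvdenot{\vectoset(A)\to\vectoset(B)}={!\fvdenot{\vectoset(A)}}\loli\fvdenot{\vectoset(B)}$ is determined by its values on \emph{all} basis vectors $b_w$, including those indexed by non-basic $w$, so many distinct vectors represent the same set-function and there is no embedded copy of $\fsdenot{\vectoset(A)\to\vectoset(B)}$ for $\fvdenot{\lambda y.P}$ to land in (e.g.\ $\fvdenot{\lambda y.\ttrue}$ sends \emph{every} $b_w$ to $\ttrue$, which will disagree with any fixed extension convention). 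Hence "sends basic vectors to basic vectors and agrees there with $\fsdenot{P}$" does not typecheck at higher output types, and the $\lambda$-abstraction and application cases are exactly where a unary "basicness" predicate fails. The standard repair is to replace it by a type-indexed logical relation $R_C\subseteq\fvdenot{C}\times\fsdenot{C}$ (equality on $\{\ttrue,\ffalse\}$ at $\bit$, $\star\,R\,\star$ at $\tunit$, componentwise at products, and $h\,R\,g$ at arrows iff $w\,R\,s$ implies $h(b_w)\,R\,g(s)$), defined at \emph{all} types since subterms of $\tilde M$ need not have $\vectoset$-types, and to prove the fundamental lemma; your observation that $\ifterm{\cdot}{\cdot}{\cdot}$ and $\letunit{\cdot}{\cdot}$ force the coefficients of a related scrutinee into $\{0,1\}$ is then precisely the content of those two cases. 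Stage~1 must be restated accordingly: $\fvdenot{\phi^{\rm vec}_A}(b_v)\mathrel{R_A}\iota_A(v)$, and $\fvdenot{\bar\phi^{\rm vec}_A}(b_w)$ depends only on the $R_A$-class of $w$ and inverts $\iota_A$ on it. With that reformulation your Stage~2 computation goes through verbatim. (A smaller point: "inverse of each other" also requires $\bar\phi^{\rm vec}_A;\phi^{\rm vec}_A$ to be the identity; your induction only treats the other composite, though the same induction, or a cardinality count using the bijectivity of $\iota_A$, supplies it.)
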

\begin{proof}
The two terms $\phi^{\rm vec}_A$ and $\bar\phi^{\rm vec}_A$ are
defined inductively on $A$.
For the definition of $\phi^{\rm vec}_{\bit}$ we are
reusing the term $\delta_v$ of Section~\ref{sec:aux-constructs}.
The definition is in Table~\ref{tab:l-l-alg}
\end{proof}

\subsection{Categorical structures of the syntactic categories.}
\label{sec:categ-struct-synt}

Out of the language ${\bf PCF}_{\it f}$ one can define a syntactic 
category: objects are
types and morphisms $A\to B$ are valid typing judgments $x:A\vdash M:B$ modulo
operational equivalence. 
Because of Theorem~\ref{th:eq}, this category is
cartesian closed, and one can easily see that the product of $x:A\entail M:B$
and $x:A\entail N:C$ is $\pair{M,N}:B\times C$, that the terminal
object is $\punit:\tunit$, that
projections are defined with $\pi_l$ and $\pi_r$, and that the
lambda-abstraction plays the role of the internal morphism.

The language ${\bf PCF}_{\it f}^{\it alg}$ almost defines a cartesian closed category: by
Theorem~\ref{th:eq-alg}, it is clear that pairing and lambda-abstraction form a
product and an internal hom. However, it is missing a terminal object
(the type $\tunit$ doesn't make one as $x:A\vdash 0:\tunit$ and
$x:A\vdash \star:\tunit$ are operationally distinct). There is no type
corresponding to the vector space $\langle0\rangle$. It is not
difficult, though, to extend
the language to support it: it is enough to only add a type $\bottype$. Its only
inhabitant will then be the term $0$: it make a terminal
object for the syntactic
category.

Finally, Theorem~\ref{th:l-l-alg} is essentially giving us a functor
${\bf PCF}_{\it f}^{\it alg}\to{{\bf PCF}_{\it f}}$ corresponding to the full embedding
$E$. This makes
a full correspondence between the two models $\FinSet$ and $\FinVec_!$, and
${\bf PCF}_{\it f}$ and ${\bf PCF}_{\it f}^{\it alg}$, showing that computationally the algebraic
structure is virtually irrelevant.

\subsection{(Co)Eilenberg-Moore category and call-by-value}
\label{sec:call-value-reduction}

From a linear category with modality $!$ there are two canonical cartesian
closed categories: the coKleisli category, but also the (co)Eilenberg-Moore
category: here, objects are still those of ${\FinVec}$, but morphisms
are now ${!A}\to{!B}$.

According to~\cite{valiron13typed}, such a model would correspond to
the call-by-value (or, as coined by~\cite{diazcaro-phd} {\em call-by-base})
strategy for the algebraic structure discussed in
Remark~\ref{rem:red-strat}.

\subsection{Generalizing to modules}
\label{sec:generalizing-modules}

To conclude this discussion, let us consider a generalization of finite vector
spaces to finite modules over finite semi-rings.

Indeed, the model of linear logic this paper uses would work in the context of a
finite semi-ring instead of a finite field, as long as addition and
multiplication have distinct units. For example, by using the
semiring $\{0,1\}$ where $1+1=1$ one recover sets and relations.
However, we heavily rely on the fact
that we have a finite field $K$ in the construction of
Section~\ref{sec:aux-constructs}, yielding the completeness result in
Theorem~\ref{th:comp-alg}.

This particular construction works because one can construct any
function between any two finite vector spaces as polynomial, for the
same reason as any function $K\to K$ can be realized as a polynomial.

\end{document}
